\renewcommand{\paragraph}[1]{\vspace{0.1cm}\noindent{\bf #1.}}
\algrenewcommand\algorithmiccomment[1]{\hfill \textcolor{gray}{$\triangleright$ \textit{#1}}}
\newtheorem{hypothesis}{Hypothesis}
\newcommand{\pnes}{s^{*}}
\newcommand{\phon}{P_{hon}}
\newcommand{\pmal}{P_{grd}}
\newcommand{\halfhalf}{(\frac{1}{2}, \frac{1}{2})}
\newcommand{\bracks}[2]{\left( #1, #2 \right)}
\newtheorem{claim}{Claim}
\newtheorem{corollary}{Corollary}
\definecolor{ao(english)}{rgb}{0.0, 0.5, 0.0}
\newcommand{\bazka}[4]{
	\begin{tabular}[t]{c|c|c}
		$\phon$/$\pmal$ & \gls{honest} & \gls{greedy} \\
		\hline
		\gls{honest} & (#1,#1) & (#2,#3) \\
		\hline
		\gls{greedy} & (#3,#2) & (#4,#4) \\
\end{tabular}}
\begin{document}
	\title{Incentive Attacks on DAG-Based Blockchains with Random Transaction Selection}

	\author{
			\IEEEauthorblockN{
				Martin Pere\v{s}\'{i}ni\IEEEauthorrefmark{1}\Letter \\
				\href{mailto:iperesini@fit.vut.cz}{iperesini@fit.vut.cz}
			}
			\and
			\IEEEauthorblockN{
				Ivan Homoliak\IEEEauthorrefmark{1}\Letter \\
				\href{mailto:ihomoliak@fit.vut.cz}{ihomoliak@fit.vut.cz}
			}
			\and
			\IEEEauthorblockN{
				Federico Matteo Ben\v{c}i\'{c}\IEEEauthorrefmark{2} \\
				\href{mailto:federico-matteo.bencic@fer.hr}{federico-matteo.bencic@fer.hr}
			}
			\and		
			\IEEEauthorblockN{
				Martin Hrub\'{y}\IEEEauthorrefmark{1} \\
				\href{mailto:hruby@fit.vut.cz}{hruby@fit.vut.cz}
			}
			\and
			\IEEEauthorblockN{
				Kamil Malinka\IEEEauthorrefmark{1}\\
				\href{mailto:malinka@fit.vut.cz}{malinka@fit.vut.cz}
			}	
			\and
			\IEEEauthorblockA{
				\hspace{2.5cm} \IEEEauthorrefmark{1}Brno University of Technology,\\ 
			    \hspace{2.5cm} Faculty of Information Technology\\
			}
			\and			
			\IEEEauthorblockA{
				\IEEEauthorrefmark{2}University of Zagreb,\\
				Faculty of Electrical Engineering and Computing\\
			}
	}

	\maketitle
	\begin{abstract}
		Several blockchain consensus protocols proposed to use of Directed Acyclic Graphs (DAGs) to solve the limited processing throughput of traditional single-chain Proof-of-Work (PoW) blockchains.
		Many such protocols utilize a random transaction selection (\gls{RTS}) strategy (e.g., PHANTOM, GHOSTDAG, SPECTRE, Inclusive, and Prism) to avoid transaction duplicates across parallel blocks in DAG and thus maximize the network throughput.
		However, previous research has not rigorously examined incentive-oriented greedy behaviors when transaction selection deviates from the protocol. %
		In this work, we first perform a generic game-theoretic analysis abstracting several DAG-based blockchain protocols that use the \gls{RTS} strategy, and we prove that such a strategy does not constitute a Nash equilibrium, which is contradictory to the proof in the Inclusive paper.
		Next, we develop a blockchain simulator that extends existing open-source tools to support multiple chains and explore incentive-based deviations from the protocol.
		We perform simulations with ten miners to confirm our conclusion from the game-theoretic analysis. 
		The simulations confirm that greedy actors who do not follow the \gls{RTS} strategy can profit more than honest miners and harm the processing throughput of the protocol because duplicate transactions are included in more than one block of different chains.
		We show that this effect is indirectly proportional to the network propagation delay.
		Finally, we show that greedy miners are incentivized to form a shared mining pool to increase their profits. 
		This undermines the decentralization and degrades the design of the protocols in question.
		To further support our claims, we execute more complex experiments on a realistic Bitcoin-like network with more than 7000 nodes. %

	\end{abstract}

\section{Introduction}
\label{sec:intro}
Blockchains have become popular due to several interesting properties they offer, such as decentralization, immutability, availability, etc. 
Thanks to these properties, blockchains have been adopted in various fields, such as finance, supply chains, identity management, the Internet of Things, file systems, etc.

Nonetheless, blockchains inherently suffer from the processing throughput bottleneck, as consensus must be reached for each block within the chain.
One approach to solve this problem is to increase the block creation rate.
However, such an approach has drawbacks.
If blocks are not propagated through the network before a new block is created, a \textit{soft fork} might occur, in which two concurrent blocks reference the same parent block.
A soft fork is resolved in a short time by a fork-choice rule, and thus only one block is eventually accepted as valid.
All transactions in an \textit{orphaned} (a.k.a., stale) block are discarded.
As a result, consensus nodes that created orphaned blocks wasted their resources and did not get rewarded. %

\begin{figure}[t]
	\centering
	\includegraphics[width=0.9\linewidth]{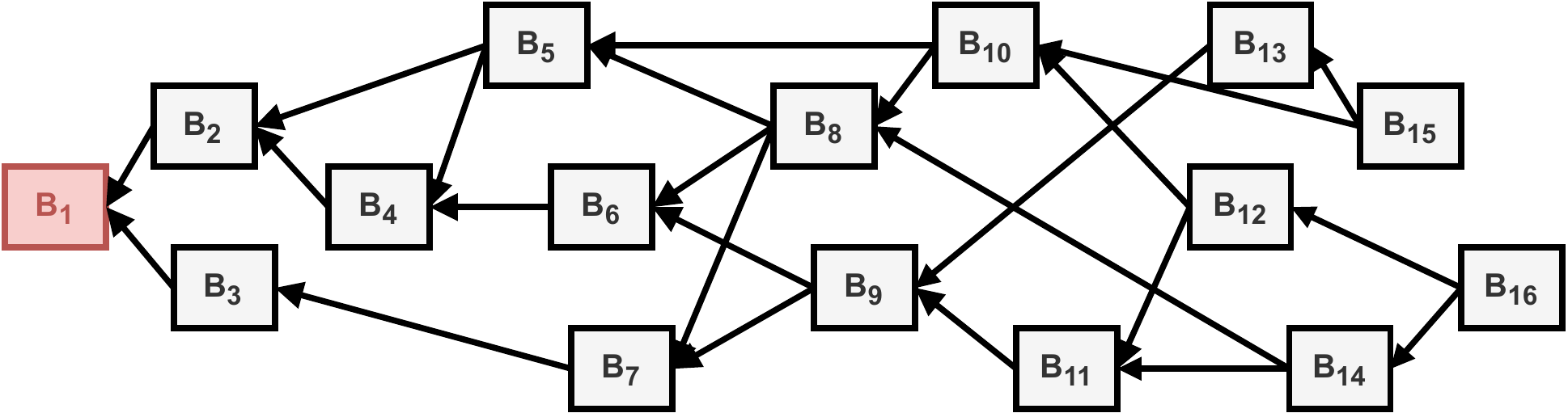}
	\caption{A structure of DAG-oriented blockchain.}
	\label{fig:dag-chain}
\end{figure}

As a response to the above issue, several proposals (e.g., Inclusive~\cite{lewenberg2015inclusive}, PHANTOM~\cite{sompolinsky2020phantom}, GHOSTDAG~\cite{sompolinsky2020phantom}, SPECTRE~\cite{sompolinsky2016spectre}) have substituted a single chaining data structure for (unstructured) \glspl{dag} (see \autoref{fig:dag-chain}), while another proposal in this direction employed structured DAG (i.e., Prism~\cite{bagaria2019prism}).
Such a structure can maintain multiple interconnected chains and thus theoretically increase processing throughput. The assumption of concerned \gls{dag}-oriented solutions is to abandon transaction selection purely based on the highest fees since this approach intuitively increases the probability that the same transaction is included in more than one block (hereafter \textit{transaction collision}).
Instead, these approaches use the random transaction selection (i.e., \gls{RTS})~\footnote{Note that \gls{RTS} involves a certain randomness in transaction selection but does not necessarily equals to uniformly random transaction selection (to be in line with the works utilizing Inclusive~\cite{lewenberg2015inclusive}, such as PHANTOM, GHOSTDAG~\cite{sompolinsky2020phantom}, SPECTRE~\cite{sompolinsky2016spectre}, as well as the implementation of GHOSTDAG called Kaspa~\cite{Kaspa}).} strategy as part of the consensus protocol to avoid transaction collisions.
Although the consequences of deviating from such a strategy might seem intuitive, no one has yet thoroughly analyzed the performance and robustness of concerned \gls{dag}-oriented approaches within an empirical study investigating incentive attacks on transaction selection.

In this work, we focus on the impact of \textbf{greedy}\footnote{Greedy actors deviate from the protocol to increase their profits.} actors in several \gls{dag}-oriented designs of consensus protocols. 
In particular, we study the situation where an attacker (or attackers) deviates from the protocol by not following the \gls{RTS} strategy that is assumed by a few \gls{dag}-oriented approaches~\cite{lewenberg2015inclusive},~\cite{sompolinsky2020phantom},~\cite{sompolinsky2020phantom},~\cite{sompolinsky2016spectre},~\cite{bagaria2019prism}.
Out of these approaches, PHANTOM~\cite{sompolinsky2020phantom}, GHOSTDAG,~\cite{sompolinsky2020phantom}, and SPECTRE~\cite{sompolinsky2016spectre} utilize \gls{RTS} that was introduced in  Inclusive~\cite{lewenberg2015inclusive} -- whose game theoretic analysis (and missing assumption about creating a mining pool) we contradict in this work.
In contrast, Prism~\cite{bagaria2019prism} does not provide any incentive-oriented analysis and thus did not show that it is resistant to any incentive attacks based on transaction selection.
Nevertheless, both lines of works employ \gls{RTS} and thus enable us to abstract their details and focus on modeling and analysis of this aspect.

We make a hypothesis stating that the attacker deviating from \gls{RTS} strategy might have two significant consequences. First, such an attacker can earn greater rewards as compared to honest participants. Second, such an attacker harms transaction throughput, as \textit{transaction collision} is increased.
We verify and prove our hypothesis in a game theoretical analysis and show that \gls{RTS} does not constitute Nash equilibrium.
Said in evolutionary terminology, a population of miners following the protocols in question is not immune against the attacker (mutant).
Next, we substantiate conclusions from game theoretical analysis by a few simulation experiments, where we focus on
an abstracted \textsc{DAG-Protocol}, inspired by existing designs.

\begin{figure}[t]
	\centering
		\vspace{0.2cm}
	\includegraphics[width=\linewidth]{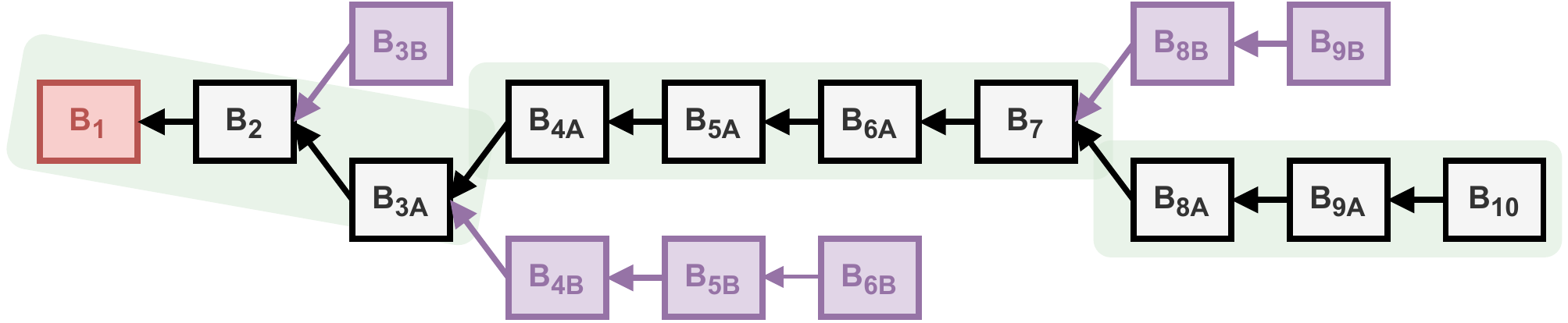}
	\caption{The longest-chain fork-choice rule with orphaned blocks depicted in purple.}\label{fig:longest-chain}
\end{figure}

\paragraph{\textbf{Contributions}}
The contributions of this work are as follows:

\begin{compactenum}

	\item We hypothesize that not following the \gls{RTS} strategy in concerned DAG-based protocols
	negatively affects the relative profit of honest miners and the effective throughput of the network.\label{contrib:1}

	\item The hypothesis is validated using the game theoretic analysis focusing on all possible scenarios involving two actors: an honest miner following \gls{RTS} and a greedy miner deviating from it.
	We conclude that the \gls{RTS} strategy does not constitute Nash equilibrium.\label{contrib:2}

	\item We build a custom simulator that extends open-source simulation tools to consider multiple chains and various incentive schemes, and thus enable us to investigate properties of concerned DAG-based protocols.\label{contrib:3}

	\item We execute experiments on an abstracted \textsc{DAG-protocol}, and
	they confirm that a greedy actor who selects transactions based on the highest fee has a significant advantage in making profits compared to honest miners following \gls{RTS}.\label{contrib:4}

	\item Next, we demonstrate by experiments that multiple greedy actors can significantly reduce the effective transaction throughput by increasing the transaction collision rate across parallel chains of DAGs.\label{contrib:5}

	\item We show that greedy actors have a significant incentive to form a mining pool to increase their relative profits, which degrades the decentralization of the concerned DAG-oriented designs. %
\end{compactenum}

\section{Background}
\label{sec:background}
We establish preliminary terms and definitions that will be used throughout this work.
The focus is put on Nakamoto's consensus that is to be optimized by DAGs.

\paragraph{Blockchain}
The blockchain is a tamper-resistant data structure in which data records (i.e., blocks) are linked using a cryptographic hash function. Each new block is agreed upon by consensus nodes running a consensus protocol.

\paragraph{\textbf{Nakamoto Consensus (NC)}}
NC~\cite{nakamoto2008bitcoin} uses a single chain to link the blocks, while \gls{pow} algorithm is used to establish consensus among nodes (i.e., miners), which is a mathematical puzzle of cryptographic zero-knowledge hash proof, where one party proves to others that it has spent a certain computational effort and thus is entitled to be a leader of the round, producing a block.
This effort represents finding a value below a  threshold (determined by the \textit{difficulty} parameter), which is computationally intensive. On the other hand, the correctness verification of the puzzle requires negligible effort.
NC is used in Bitcoin, where the order of blocks was originally determined using the longest chain fork-choice rule (see \autoref{fig:longest-chain}). 
However, this rule was later replaced in favor of the strongest chain rule (see \autoref{fig:strongest-chain}), which takes into account the accumulated difficulty of the \gls{pow} puzzle.

\begin{figure}[t]
\centering
\includegraphics[width=\linewidth]{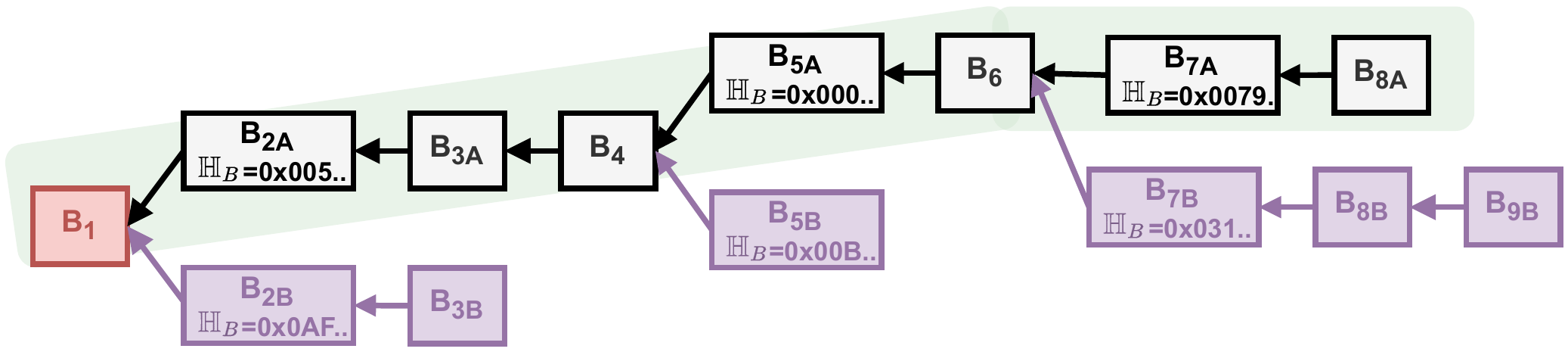}
\caption{The strongest-chain fork-choice rule with the main chain depicted in green and orphaned blocks in purple.}\label{fig:strongest-chain}
\end{figure}

\paragraph{\textbf{Fees \& Rewards}}
Miners creating new blocks are rewarded with block rewards. Block rewards refer to new crypto tokens (e.g., BTC) awarded by the blockchain network. It is assumed that miners earn profits proportionally to their consensus (i.e., mining) power. Another source of income for miners is \textit{transaction fees}, which are awarded to the miner who includes the corresponding transaction in a block. Transaction fees are paid by clients who deliberately choose the value of the fee based on the transaction's priority. To maximize profit, miners use a transaction selection mechanism that prioritizes the transactions with the highest (per Byte)\footnote{Note that since the Bitcoin block has limited capacity and transactions might have different sizes, miners consider fee normalized per Byte.} fees. %

\paragraph{\textbf{Mempool}}
A mempool is a data structure of each miner and contains transactions that can potentially be included (i.e., mined) in a block produced by a miner. A new transaction is `gossiped', i.e., sent from a client to its peers, who in turn forward the transaction to their peers, etc., until the transaction has propagated throughout the network.
Due to a network propagation delay, transactions and new blocks are not immediately propagated throughout the network. %
Therefore, the mempool might slightly vary node per node, especially at the time a new block is mined.

\paragraph{\textbf{Block Creation Time}}
In Bitcoin, there is a default block creation time \gls{block-creation-rate} set to create a new block every \(10\) minute on average. This parameter is derived directly from the network difficulty, which changes over time, and it is adjusted every \(2016\) block to fit the target value of 10 minutes (i.e., approximately every two weeks). According to Gervais et al.,~\cite{gervais2016security}, the stale block rate of Bitcoin is \(0.41\%\). Other sources~\cite{Decker13-thieee, GOBEL201623} state the values around \(0.5-1\%\), which is considered negligible. We assume that the mathematical model corresponding to \gls{block-creation-rate} of Bitcoin is an exponentially distributed random variable with the time between two consecutive blocks given by
\begin{equation}
f^{\mathbb{T}}(\gls{time})=\Lambda \gls{eulers-number}^{-\Lambda \gls{time}},
\end{equation}
where \(\Lambda = \frac{1}{\gls{block-creation-rate}}\)~\cite{bowden2018block, grunspan2020mathematics} and \gls{time} is time in seconds.%
Therefore, we model the blocks as being generated according to a Poisson process with a specified $\gls{block-creation-rate}$.

\section{Problem Definition}
\label{sec:problem}
Let there be a PoW blockchain network that uses the Nakamoto consensus and consists of honest and greedy miners, with the greedy miners holding a fraction \gls{adversarial-mining-power} of the total mining power (i.e., adversarial mining power). 
Then, we denote the network propagation delay in seconds as \gls{network-propagation-delay} and the block creation time in seconds as \gls{block-creation-rate}.
We assume that the minimum value of \gls{block-creation-rate} is constrained by \gls{network-propagation-delay} of the blockchain network.
It is well-known that Nakamoto-style blockchains generate stale blocks (a.k.a., orphan blocks).
As a result, a fraction of the mining power is wasted. The rate at which stale blocks are generated increases when \gls{block-creation-rate} is decreased, which is one of the reasons why Bitcoin maintains a high \gls{block-creation-rate} of 600s.

\paragraph{\textbf{DAG-Oriented Designs}}
Many DAG-oriented designs were proposed to allow a decrease of \gls{block-creation-rate} while utilizing stale blocks in parallel chains, which should increase the transaction throughput.
Although there are some \gls{dag}-oriented designs that do not address the problem of increasing transaction throughput (e.g., IoTA~\cite{silvano2020iota}, Nano~\cite{Lemahieu2018NanoA}, Byteball~\cite{Churyumov2016}),
we focus on the specific group of solutions addressing this problem, such as Inclusive~\cite{lewenberg2015inclusive}, GHOSTDAG, PHANTOM~\cite{sompolinsky2020phantom}, SPECTRE~\cite{sompolinsky2016spectre},  and Prism~\cite{bagaria2019prism}.
We are targeting the  \gls{RTS} strategy, which is a common property of this group of protocols.
In the \gls{RTS}, the miners do not take into account transaction fees of all included transactions; instead, they select transactions of blocks randomly -- although not necessarily uniformly at random (e.g.,~\cite{Kaspa}).
In this way, these designs aim to eliminate transaction collision within parallel blocks of the \gls{dag} structure.
Nevertheless, the interpretation of randomness in \gls{RTS} is not enforced/verified by these designs, and miners are trusted to ignore fees of all (or the majority of~\cite{Kaspa}) transactions for the common ``well-being'' of the protocol.
Contrary, miners of blockchains such as Bitcoin use a well-known transaction selection mechanism that maximizes profit by selecting all transactions of the block based on the highest fees -- we refer to this strategy as the \textit{greedy strategy} in the context of considered DAG-based protocols.

\subsection{\textbf{Assumptions}}\label{sec:attack-assumptions}
We assume a generic DAG-oriented consensus protocol using the \gls{RTS} strategy (denoted as \textsc{DAG-Protocol}).
Then, we assume that the incentive scheme of \textsc{DAG-Protocol} relies on transaction fees (but additionally might also rely on block rewards),\footnote{Note that block rewards would not change the applicability of our incentive attacks, and the constraints defined in the game theoretic model (see \autoref{sec:game-theory:model-analysis}) would remain met even with them.} and transactions are of the same size.\footnote{Note that this assumption serves only for simplification of the follow-up sections. Transactions of different sizes would require normalizing fees by the sizes of transactions to obtain an equivalent setup (i.e., a fee per Byte).}
Let us assume that the greedy miners may only choose a different transaction selection strategy to make more profit than honest miners.
Then, we assume that \textsc{DAG-Protocol} uses rewarding where the miner of the block \gls{phantom-block} gets rewarded for all unique not-yet-mined transactions in \gls{phantom-block} (while she is not rewarded for transaction duplicates mined before).

\subsection{\textbf{Identified Problems -- Incentive Attacks}}
Although the assumptions stated above might seem intuitive, there is no related work studying the impact of greedy miners deviating from the  \gls{RTS} strategy on any of the considered \textsc{DAG-protocol}s
(GHOSTDAG, PHANTOM~\cite{sompolinsky2020phantom}, SPECTRE~\cite{sompolinsky2016spectre}, Inclusive~\cite{lewenberg2015inclusive}, and Prism~\cite{bagaria2019prism})
and the effect it might have on the throughput of these protocols as well as a fair distribution of earned rewards. %
Note that we assume GHOSTDAG, PHANTOM, and SPECTRE are utilizing the  \gls{RTS} strategy that was proposed in the Inclusive protocol~\cite{sompolinsky2016spectre}, as recommended by the (partially overlapping) authors of these works -- this is further substantiated by the practical implementation of GHOSTDAG/PHANTOM called Kaspa~\cite{Kaspa}, which utilizes a variant of \gls{RTS} strategy (see \autoref{sec:dag-approaches-Kaspa}) that selects a majority portion of transactions in a block uniformly at random, while a small portion of the block capacity is seized by the transaction selected based on the highest fees.
Nevertheless, besides potentially increased transaction collision rate, even such an approach enables more greedy behavior.

\smallskip
We make a hypothesis for our incentive attacks:
\begin{hypothesis}\label{hypo:problem-definition}
	A greedy transaction selection strategy will decrease the relative profit of honest miners as well as transaction throughput in the \textsc{DAG-Protocol}.

\end{hypothesis}
\noindent
Note that the greedy transaction selection strategy deviates from the \textsc{DAG-protocol} and thus is considered adversarial.

\section{DAG-Oriented Solutions}
\label{sec:dag-oriented-solutions}

In this section, we briefly review a few \textsc{DAG-protocol}s potentially vulnerable to the incentive attacks we are investigating.

\paragraph{\textbf{Inclusive Protocol}}
\label{sec:dag-solutions:inclusive}
Lewenberg et al.~\cite{lewenberg2015inclusive} proposed a new way to structure the chain that can operate at much faster rates than Bitcoin.
The authors utilize the DAG to form blocks in a structure called the \textit{blockDAG}.
This structure is created by allowing blocks to reference multiple previous blocks, enabling less strict transaction inclusion rules that can potentially store conflicting transactions in parallel blocks due to allowing \gls{block-creation-rate} $<$ \gls{network-propagation-delay}.
This means that the system can process larger blocks faster than is possible to gossip within the bounds of \gls{network-propagation-delay}, allowing for an increase in transaction throughput.
The authors propose the protocol as a building block for other DAG-oriented protocols, and
they claim that they reduce the advantage of highly connected miners in single-chain protocols since even stale blocks (of a single-chain) are included.

Further, the authors present the key concept of \textit{randomly selecting} transactions (i.e., \gls{RTS}) to avoid collisions; however, according to their definition, the random selection does not necessarily equals to uniformly random selection. %
The authors theoretically analyze this assumption by modeling the protocol and its transaction selection as a game, in which rational miners opt to avoid collisions.
According to the authors, the game's outcome is a sequential equilibrium, where the growing fraction of greedy miners causes a decrease in their profits, which should make such a strategy less attractive
(we show this phenomenon in \autoref{fig:malicious-miners-earn-more-profit}).
However, the authors do not assume that the miners can create a mining pool, in which they can achieve significantly higher profits than honest miners (we demonstrate it in \autoref{fig:duel-miners-earn-profit}).

\paragraph{\textbf{PHANTOM}}\label{sec:dag-solutions:phantom}
The PHANTOM protocol~\cite{sompolinsky2020phantom} is a generalization of the NC's longest-chain protocol.
While in NC each block contains a hash of the previous block in the chain it extends, PHANTOM organizes blocks in a \gls{dag}.
As a result, each block may contain multiple hash references to predecessors, like in Inclusive~\cite{lewenberg2015inclusive} that is the bases for PHANTOM.
The key contribution of PHANTOM is that it totally orders all blocks by solving \textit{the maximum \textit{k}-cluster SubDAG problem}, which utilizes the concept of the main chain and the distance from it.
Unlike NC which discards the blocks out of the main chain (i.e., orphan blocks), PHANTOM includes these blocks in a \gls{dag}, %
except for the attacker-created blocks that would be weakly connected to \gls{dag}. %

PHANTOM uses the \gls{RTS} strategy proposed by the (partially overlapping) authors of the Inclusive protocol.
The incentive scheme of PHANTOM revolves around rewarding all miners who include a  transaction within a new block \gls{phantom-block}, while assuming that transactions in the parallel blocks are unique and due to a DAG will not be discarded as in single-chain blockchains.
If there are some duplicate transactions, PHANTOM rewards them only once -- in the first block that includes them, which is evaluated after establishing the total ordering.
However, such an incentive scheme must be constructed with care, as sidechain blocks might also be the result of an attack.
Therefore, the reward a miner receives for publishing \gls{phantom-block} is indirectly proportional to the discretized delay at which \gls{phantom-block} was referenced by the main chain.
For this reason, the protocol defines a measure of the delay in publishing \gls{phantom-block} w.r.t. the main chain, called the \textit{gap parameter} \gls{gap-parameter}.
The value by which the reward is ``decayed'' is determined by the discount function \gls{discount-function}, where \(\gls{discount-function}(\gls{gap-parameter}(\gls{phantom-block})) \in [0,1]\) and \gls{discount-function} is weakly decreasing.\footnote{I.e., later inclusion of the side-chain block imposes lower reward.}
Finally, the miner is rewarded for including transactions in \gls{phantom-block} using the \textit{payoff function}.
In detail, the miner gets rewarded for all non-duplicate transactions contained in \gls{phantom-block}, and after \gls{discount-function} was applied to the respective transaction fees.

\paragraph{\textbf{GHOSTDAG}}
PHANTOM is considered impractical for efficient use~\cite{sompolinsky2020phantom}, because it requires the solution of an NP-hard problem (the maximum \textit{k}-cluster SubDAG problem).
Therefore, the authors of PHANTOM have developed a greedy (heuristic) algorithm to find block clusters, obtaining the GHOSTDAG protocol.
This protocol uses greedy ordering of the \gls{dag}, which has practical advantages.

\paragraph{\textbf{Kaspa}}\label{sec:dag-approaches-Kaspa}
The \gls{RTS} strategy is utilized even in the already running blockchain Kaspa~\cite{Kaspa}, which is the implementation of the GHOSTDAG protocol.
Kaspa selects transactions using a variant of the \gls{RTS} strategy, in which a small fraction of a block is dedicated to prioritized transactions with higher fees and remaining part of a block serves for transactions selected uniformly at random.
We argue that even this approach is vulnerable to our incentive attacks since the part of the block relying on uniformly random selection cannot be enforced/verified, and thus miners might still prioritize transactions with higher fees, which can consequently result in throughput problems and incentive attacks.
Nevertheless, the current Kaspa mainnet is not saturated, and its blocks usually contain only 1 to 5 transactions,\footnote{\url{https://explorer.kaspa.org/}}
not fully utilizing the concept of \gls{dag} for increased throughput.

\paragraph{\textbf{Prism}}
Prism~\cite{bagaria2019prism} is a protocol that aims to achieve a total ordering of transactions with consistency and liveness guarantees while achieving high throughput and low latency.
Prism differs from traditional single-chain blockchains since it involves a few parallel chains rather than a single chain.
It decouples transaction confirmation, validation, and proposal, whereas these processes are traditionally tightly coupled.
Prism replaces traditional blocks with (1) transaction blocks (i.e., blocks that contain transactions), (2) voter blocks (i.e., blocks that vote for proposer blocks), and (3) proposer blocks (i.e., blocks that reference transaction blocks). %
The authors of Prism recognize that blocks mined in parallel chains
might contain duplicate transactions.
To cope with this problem, they propose to randomly divide unprocessed transactions of the local mempool into multiple queues and then create blocks using transactions only from one randomly selected queue, which is a variant of \gls{RTS} strategy and thus enables incentive attacks based on greedy strategy.
However, the authors do not provide any analysis related to such incentive attacks.

\section{Game Theoretical Analysis}
\label{sec:gametheory}
In this section, we model a \textsc{DAG-protocol}\footnote{Note that we consider DAG-based designs (described in \autoref{sec:dag-oriented-solutions}) under this generic term of \textsc{DAG-protocols} to simplify the description but not to claim that all \textsc{DAG-protocols} (with \gls{RTS}) can be modeled as we do.}
as a two-player game, in which the honest player/phenotype ($\phon$) uses the \gls{RTS} strategy and the greedy player/phenotype ($\pmal$) uses the greedy transaction selection strategy.\footnote{Even though consensus protocols might contain multiple players, they might represent only one of two behavioral phenotypes, which is sufficient for us to prove the feasibility of our attack in our game theory model.}
We assume that the fees of transactions vary -- the particular variance of fees is agnostic to this analysis.
We present the game theoretical approach widely used to analyze interactions of players (i.e., consensus nodes) in the blockchain.
Several works attempted to study the outcomes of different scenarios in blockchain networks (e.g.,~\cite{2019-Ziyao-survey-game,2020-Wang-game-mining,singh2020game}) but none of them addressed the case of \textsc{DAG-protocols} and their transaction selection mechanisms.

\smallskip
\noindent We examine the following hypothesis:
\begin{hypothesis}\label{hypo:game1}
	So-called (honest) \gls{honest}-behavior with \gls{RTS} is a Subgame Perfect Nash Equilibrium (\gls{SPNE}) in an infinitely repeated \textsc{DAG-Protocol} game. 
	This was presented in Inclusive~\cite{lewenberg2015inclusive} and we will contradict it. 
\end{hypothesis}

\noindent Generally speaking, any strategic profile $\pnes$ becomes an equilibrium (\gls{SPNE}) in an infinitely repeated game $\Gamma$ if one of the following holds:
\begin{compactitem}

	\item $\pnes$ is a Pure Nash Equilibrium (\gls{PNE}) in the base (stage) $\Gamma$ game. Then, $\pnes$ is trivially a \gls{SPNE} too.

	\item There exists an incentive making the rational players to agree on $\pnes$.
	We recall so-called \textit{Folk theorem}~\cite{Gibbons1992-ug, Osborne1994} stating that any (individually) efficient profile may become a mutual agreement (a stable profile) if the players are willing to punish a player deviating from the agreement.
	Punishing is relevant only if the targeted player is \textit{farsighted} enough.
	Let $\delta \in \langle 0,1 \rangle$ denote the discount factor \cite{Osborne1994} put by the player to her future profits.
\end{compactitem}

\begin{table}[t]
	\centering{\bazka{a}{b}{c}{d}}
	\caption{The utility functions $U_{hon},U_{mal}$ in the \textit{base game.}} %
	\label{tab:game-base}
\end{table}

\smallskip\noindent
We study the trustworthiness of Hypothesis~\autoref{hypo:game1} in the following analysis.
Our goal is to find a principle that would ensure the \gls{honest}-behavior in some natural way (self-enforcing principle).
\subsection{Model of the \textsc{DAG-Protocol}} %

Let us assume a finite non-empty population of miners.
We want to distinguish between honest (\gls{honest}) and greedy (\gls{greedy}) behavior (i.e., behavioral phenotypes).

The nature of \textsc{DAG-Protocol} imposes that players receive transaction fees after a certain delay (necessary to achieve consensus on the order of blocks).
However, we can discretize the flow of transactions into atomic rounds of the  game in order to simplify our analysis.
This allows us to study the behavior of players within a well-defined time frames. %
In every round, players decide on their actions and receive payoffs consequently.
Overall, we can model the situation as \textit{a repeated game with separate discrete rounds}.
Since no round is explicitly marked as the last one, this game is repeated infinitely.
This allows us to analyze players' behavior over an extended period of time, which is essential for understanding the long-term effects of different strategies.

We model \textsc{DAG-Protocol} in the form of \textit{an infinitely repeated two players game with a base game}
\begin{equation}
	\Gamma = (\{\phon,\pmal\};\{\gls{honest},\gls{greedy}\};U_{hon},U_{mal}),
	\label{def-basegame}
\end{equation}
where $\phon$ is the player's determination to play \gls{honest} strategy and $\pmal$ the player's determination to the \gls{greedy}-behavior.
Pure strategy \gls{honest} is interpreted as the \gls{RTS}, while \gls{greedy} strategy represents picking the transactions with the highest fees.
Payoff functions are depicted in \autoref{tab:game-base},
where the profits in the strategic profiles $(\gls{honest},\gls{honest})$ and $(\gls{greedy},\gls{greedy})$ are uniformly distributed between players.
In the following, we analyze the model in five possible scenarios with generic levels $a, b, c, d$ of the payoffs.

\begin{table}[t]
	\subcaptionbox{Scenario 1.\label{tab:sc1}}{
		\bazka{1}{0}{2}{3}
	}
	\vspace*{0.7em}
	\subcaptionbox{Scenario 2.\label{tab:sc2}}{
		\bazka{1}{0}{3}{2}
	}
	\vspace*{0.7em}
	\subcaptionbox{Scenario 3.\label{tab:sc3}}{
		\bazka{2}{0}{3}{1}
	}
	\subcaptionbox{Scenario 4.\label{tab:sc4}}{
		\bazka{2}{1}{3}{0}
	}
	\subcaptionbox{Scenario 5.\label{tab:sc5}}{
		\bazka{1}{0.5}{1.5}{1}
	}
	\label{fig:allgames}\caption{The utility functions with assigned example values.}

\end{table}

\subsection{Analysis of the Model}\label{sec:game-theory:model-analysis}

For purposes of our analysis, lets start with the assumption that \gls{greedy}-behavior is more attractive and profitable than \gls{honest}-behavior.
Otherwise, there would be no reason to investigate Hypothesis~\autoref{hypo:game1}.
Thus, let us consider $c>a$ as the basic constraint.
We also assume $c>b$, meaning that \gls{honest}-behavior loses against \gls{greedy}-behavior in the cases of $(\gls{honest},\gls{greedy})$ and  $(\gls{greedy},\gls{honest})$ profiles.
These basic constraints yield the following scenarios:
\begin{compactitem}
	\item \textbf{Scenario 1} (\autoref{tab:sc1}): $d>c>a>b$,
	\item \textbf{Scenario 2} (\autoref{tab:sc2}): $c>d>a>b$,
	\item \textbf{Scenario 3} (\autoref{tab:sc3}): $c>a>d>b$,
	\item \textbf{Scenario 4} (\autoref{tab:sc4}): $c>a>b>d$,
	\item \textbf{Scenario 5} (\autoref{tab:sc5}): where $a=d$ and $c>a$, $c>b$.
\end{compactitem}
Note that we do not assume the case $a=b$ since the presence of $\pmal$ will drain all high-fee transactions that $\phon$ would originally obtain. 
We assign numerical utilities $\{0,1,2,3\}$ to $\{a,b,c,d\}$, respecting the constraints of scenarios. %
Note that their values are irrelevant as long as the constraints of scenarios are met.
Scenarios 1 and 2 are covered just for a sake of completeness.
If the transaction fees were to cause such game outcomes, there would be no need to trust in \gls{honest}-behavior, and the system would settle in the unique $(\gls{greedy},\gls{greedy})$ \gls{PNE}. %
The behavior of players within Scenarios 3 and 4 is more complex.
We analyze these scenarios in the following, while we present the circumstances needed for the profile $(\gls{honest},\gls{honest})$ to become a stable outcome of the system.
Scenario 5 is based on the constraint saying that the sum of all incoming transaction fees is constant in any set of rounds, therefore playing either $(\gls{greedy},\gls{greedy})$ or $(\gls{honest},\gls{honest})$ should generate the same profits.

\subsubsection{\textbf{Scenario 3}}
\label{sec:scenario3}
\paragraph{\textbf{(A) Purely Non-Cooperative Interpretation}}
Scenario 3 (\autoref{tab:sc3}) represents a typical instance of so-called \textit{Prisoner's dilemma}~\cite{Osborne1994}, where \textit{cooperative profile} $(\gls{honest},\gls{honest})$ brings the highest social outcome; however, such a profile is unstable because each player does better if she deviates by playing $\gls{greedy}$.

\begin{claim}
	Players choose $(\gls{greedy},\gls{greedy})$ in Scenario 3.
\end{claim}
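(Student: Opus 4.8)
The plan is to prove that $(\gls{greedy},\gls{greedy})$ is the unique Pure Nash Equilibrium of the base game $\Gamma$ under the Scenario 3 constraints $c>a>d>b$, and hence the predicted outcome. The natural tool is the concept of a \emph{strictly dominant strategy}: I will show that for each player, choosing \gls{greedy} yields a strictly higher payoff than choosing \gls{honest}, regardless of what the opponent does. Because a strategy profile consisting of strictly dominant strategies is automatically the unique \gls{PNE}, this immediately settles the claim.

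First I would fix attention on a single player, say the row player $\phon$, and compare her two rows while holding the column player's action fixed. If the opponent plays \gls{honest}, then switching from \gls{honest} to \gls{greedy} changes $\phon$'s payoff from $a$ to $c$; since $c>a$ by the basic constraint, deviation is strictly profitable. If instead the opponent plays \gls{greedy}, then $\phon$'s payoff changes from $b$ to $d$; since $d>b$ in Scenario 3, deviation is again strictly profitable. Thus \gls{greedy} strictly dominates \gls{honest} for $\phon$. By the symmetry of the payoff matrix in \autoref{tab:sc3} (the game is symmetric under swapping the roles of the two players), the identical argument applies to $\pmal$, so \gls{greedy} is strictly dominant for both players.

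From here the conclusion follows in two short steps. Since each player has a strictly dominant action, the only profile in which neither player can improve by a unilateral deviation is $(\gls{greedy},\gls{greedy})$, which is therefore the unique \gls{PNE} of the stage game. I would also remark, to connect back to Hypothesis~\autoref{hypo:game1}, that the cooperative profile $(\gls{honest},\gls{honest})$ fails to be an equilibrium precisely because each player strictly prefers to defect, even though it yields a higher joint payoff ($2a>c+b$ need not hold, but the social-welfare comparison is the familiar Prisoner's-dilemma tension and is not needed for the claim itself).

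I do not anticipate a genuine obstacle here, since the argument is a direct comparison of payoffs; the only care required is to verify both strict inequalities ($c>a$ and $d>b$) explicitly rather than appealing to the ordering informally, and to invoke symmetry correctly so that the dominance argument need only be carried out once. The one subtlety worth flagging is that this claim concerns the \emph{stage} game; the more interesting question of whether $(\gls{honest},\gls{honest})$ can be sustained in the infinitely repeated game via punishment strategies (the Folk-theorem route mentioned earlier) is deferred to the subsequent repeated-game analysis and is not part of this particular claim.
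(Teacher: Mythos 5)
Your proposal is correct and takes essentially the same approach as the paper, which also argues that \gls{greedy} strictly dominates \gls{honest} and hence $(\gls{greedy},\gls{greedy})$ is the unique \gls{PNE}; you merely spell out the two payoff comparisons ($c>a$ and $d>b$) and the symmetry step that the paper leaves informal.
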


\begin{proof}
	(Informal) Strategy $\gls{greedy}$ strictly dominates $\gls{honest}$ and thus $(\gls{greedy},\gls{greedy})$ is the unique \gls{PNE}.
\end{proof}

\begin{corollary}
	If $\phon$ wants to follow the social norm of DAG-Protocol (which is irrational though) then $\pmal$'s best response is pure $\gls{greedy}$. 
	If $\phon$ is uncertain about her determination and plays randomly in mixed behavior $(p,1-p)$, then $\pmal$'s best response is pure $\gls{greedy}$ for any $p \in \langle 0,1 \rangle$, where expected payoff from pure \gls{greedy} is superior:
	\begin{equation}
		3p + 1(1-p) > 2p + 0(1-p).
		\label{sc3-br}
	\end{equation}
\end{corollary}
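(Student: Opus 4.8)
The plan is to reduce both assertions of the corollary to the strict-dominance fact already established in the preceding Claim, namely that for $\pmal$ the strategy \gls{greedy} strictly dominates \gls{honest} in the base game of Scenario~3. The first assertion is then the pure special case in which $\phon$ commits to \gls{honest}: I would read $\pmal$'s two payoffs off \autoref{tab:sc3}, noting that against \gls{honest} the greedy response yields $c=3$ whereas the honest response yields $a=2$. Since $c>a$ holds by the defining constraint of the scenario, pure \gls{greedy} is $\pmal$'s unique best response, regardless of whether $\phon$'s adherence to the social norm is itself rational.

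For the mixed assertion, I would let $\phon$ play the behavior $(p,1-p)$, placing weight $p$ on \gls{honest} and $1-p$ on \gls{greedy}, and compute $\pmal$'s expected payoff under each of her two pure responses. Against this mixture, pure \gls{greedy} collects $c$ when $\phon$ realizes \gls{honest} and $d$ when $\phon$ realizes \gls{greedy}, giving expected payoff $cp + d(1-p) = 3p + 1(1-p)$; pure \gls{honest} instead collects $a$ and $b$ in the two contingencies, giving $ap + b(1-p) = 2p + 0(1-p)$. The assertion therefore reduces to the inequality of \autoref{sc3-br}, which I would simplify to $2p+1 > 2p$, i.e.\ $1>0$, establishing that \gls{greedy} is strictly preferred for every $p \in \langle 0,1 \rangle$.

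The reason this holds uniformly in $p$ is the structural point worth making explicit: because \gls{greedy} dominates \gls{honest} coordinate-wise for $\pmal$ --- the greedy payoff beats the honest payoff against each pure action of $\phon$, i.e.\ both $c>a$ and $d>b$ hold in Scenario~3 --- and because expected payoff is linear in the mixing weight, the strict inequality survives every convex combination. Consequently I do not expect any genuine obstacle: the statement is a corollary precisely because it inherits strict dominance from the Claim. The only point demanding care is the bookkeeping of which matrix entry $\pmal$ receives in each contingency, that is, correctly pairing $\phon$'s realized action with the corresponding second-coordinate payoff for $\pmal$ in \autoref{tab:sc3}.
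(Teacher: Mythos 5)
Your proposal is correct and matches the paper's own justification: the corollary is exactly the strict-dominance fact from the preceding Claim read off entrywise ($c>a$ against \gls{honest}, $d>b$ against \gls{greedy}), and your expected-payoff computation $cp+d(1-p)>ap+b(1-p)$ is precisely the displayed inequality \autoref{sc3-br}, which reduces to $1>0$. The bookkeeping of $\pmal$'s second-coordinate payoffs in \autoref{tab:sc3} is also correct.
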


\paragraph{\textbf{(B) When Some Coordination is Allowed}}
Let us introduce coordinated behavior into the game.
Stability of $(\gls{greedy},\gls{greedy})$ profile might now become possible in the context of Folk theorem if the following two conditions are fulfilled.
(1) It must be \textit{common knowledge}~\cite{Aumann} that $\phon$ adopts so called \textit{grim trigger strategy}~\cite{Osborne1994, Mailath}, i.e., she plays $\gls{honest}$ as long as $\pmal$ plays $\gls{honest}$, and once $\pmal$ deviates, then $\phon$ turns into $\gls{greedy}$-behavior forever, bringing the game into $(\gls{greedy},\gls{greedy})$ profile.
Player $\pmal$ is punished in this way.
The first condition establishes a kind of agreement (a social norm) between players in this scenario.
(2) $\pmal$'s discount factor is higher than the minimal value $\delta$:
\begin{equation}
	\delta = \frac{\overline{v}-v}{\overline{v}-\underline{v}} = \frac{c-a}{c-d} = \frac{1}{2},
	\label{deltaCond}
\end{equation}
where $v=a$ is the payoff in the agreement profile, $\overline{v}=c$ is the payoff when deviating from the agreement and $\underline{v}=d$ is the consequence of punishments.
\noindent
Therefore, a player $i$ with $\delta_i$ evaluates her future payoffs as
\begin{equation}
	\pi(s_1,s_2,...) = \sum_{t=1} \delta_i^{t-1}\cdot U_i(s_t).
\end{equation}
E.g., a player $i$ with $\delta_i=0.5$ is indifferent between receiving $100$ in payoff now or $200$ in the future.
A player with $\delta_i \rightarrow 0$ does not bother about the future, i.e., setting agreements with such a player makes no sense.

The Folk theorem states that a player $i$ with her $\delta_i > \delta$ in the current round (e.g., the 1st round) prefers to play the agreed $\gls{honest}$ because her expectation $\pi((\gls{honest},\gls{honest}),...)$ is higher than the profit from deviating the agreement and consequent punishments.
This assumption is highly theoretical in our case, and we discuss it later in more detail (see \autoref{sec:game-conclusions}).
Also, let us note that with increasing variance in transaction fees, the \gls{greedy} behavior becomes more tempting, and thus it is difficult to believe that $\pmal$'s discount factor exceeds the gap in \autoref{deltaCond}.

\subsubsection{\textbf{Scenario 4}}
\paragraph{\textbf{(A) Purely Non-Cooperative Interpretation}}
\smallskip\phantom{a} 
Scenario 4 (\autoref{tab:sc4}) an anti-coordination game~\cite{Osborne1994} instance, so the game has two \gls{PNE}s $(\gls{honest},\gls{greedy})$ \& $(\gls{greedy},\gls{honest})$, and one Mixed Nash Equilibrium (\gls{MNE}) in mixed strategic profile $\bracks{\halfhalf}{\halfhalf}$.

\begin{claim}
	The most reasonable behavior in Scenario 4 is to play $\halfhalf$ for both players.
\end{claim}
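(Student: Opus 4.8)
The plan is to defend the word ``reasonable'' on two fronts: to rule out the two asymmetric pure equilibria as a population-level convention, and to show that the symmetric profile $\halfhalf$ is not just a Nash equilibrium but an \emph{evolutionarily stable strategy} (ESS), which is the appropriate notion of stability in the evolutionary framing used throughout this analysis.

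First I would fix notation. By the symmetry of \autoref{tab:game-base} the payoff to the row player $\phon$ is the matrix $A = \left(\begin{smallmatrix} a & b \\ c & d \end{smallmatrix}\right)$ with rows and columns indexed by $(\gls{honest}, \gls{greedy})$, and $\pmal$ faces $A^{\top}$. Writing $\phon$'s mixed action as $(p, 1-p)$, the opponent $\pmal$ is indifferent between $\gls{honest}$ and $\gls{greedy}$ precisely when $p(a-c) = (1-p)(d-b)$; substituting the Scenario-4 values $a=2, b=1, c=3, d=0$ yields $p = \tfrac{1}{2}$, and by symmetry the same holds for $\pmal$. This recovers $\bracks{\halfhalf}{\halfhalf}$ as the unique interior, hence unique symmetric, Nash equilibrium, matching the statement.

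Next I would discard the pure equilibria $(\gls{honest},\gls{greedy})$ and $(\gls{greedy},\gls{honest})$ as a credible prediction. Each is asymmetric and awards the strictly larger payoff $c$ to whoever ends up \gls{greedy} (recall $c>b$), so each player prefers the equilibrium in which she is the greedy one. Since the miners are anonymous and interchangeable, there is no role asymmetry or focal point on which to coordinate a choice between the two, and, unlike Scenario~3, no grim-trigger agreement selects one of them. Hence neither pure profile is self-enforcing as a convention for the whole population.

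The core step is to certify that $\sigma^{*} = \halfhalf$ is an ESS, where $U(\sigma,\tau)$ denotes the payoff of a player using $\sigma$ against an opponent using $\tau$. Because $\sigma^{*}$ is completely mixed and a Nash equilibrium, every action is a best reply to it, so $U(\sigma^{*},\sigma^{*}) = U(\sigma,\sigma^{*})$ for all $\sigma$ and the ESS test reduces to the second-order condition $U(\sigma^{*},\sigma) > U(\sigma,\sigma)$ for every mutant $\sigma = (r,1-r) \neq \sigma^{*}$. Evaluating both sides with the Scenario-4 payoffs gives
\[
  U(\sigma^{*},\sigma) - U(\sigma,\sigma) = 2\left(r - \tfrac{1}{2}\right)^{2} > 0,
\]
so a population fixed at $\halfhalf$ cannot be invaded by any mutant, whereas either pure profile can be. I expect the real obstacle to be conceptual rather than algebraic: ``most reasonable'' is not itself a formal predicate, so the crux is choosing and justifying the solution concept. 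The ESS computation supplies the rigorous stability guarantee, and the symmetry/anonymity argument explains why the higher-total-payoff pure profiles remain unreachable as a stable outcome.
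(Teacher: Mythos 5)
Your proposal is correct, and it reaches the same destination as the paper -- the symmetric mixed profile $\bracks{\halfhalf}{\halfhalf}$ with expected payoff $(\frac{3}{2},\frac{3}{2})$ -- but by a genuinely more rigorous route. The paper's justification is explicitly informal: it reasons about $\phon$'s expectations under uncertainty about her opponent's phenotype (pure \gls{honest} and pure \gls{greedy} each yield $\frac{3}{2}$ against an opponent believed to mix $\halfhalf$), concludes that mixing is the best stable outcome from $\phon$'s perspective, and closes by noting that $\pmal$'s best response to $\halfhalf$ re-establishes the \gls{MNE}. You instead (i) derive $p=\frac{1}{2}$ from the indifference condition, (ii) rule out the two asymmetric pure equilibria on the grounds that anonymous, interchangeable miners have no coordination device to select between them -- a point the paper leaves implicit -- and (iii) certify $\halfhalf$ as an evolutionarily stable strategy via the second-order condition, and your algebra $U(\sigma^{*},\sigma)-U(\sigma,\sigma)=2(r-\tfrac{1}{2})^{2}>0$ checks out against the Scenario-4 payoffs. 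The ESS certificate is a strictly stronger stability claim than the paper's best-response reasoning, and it dovetails with the paper's own closing corollary, which appeals to evolutionary stability to reject \gls{honest}; your version also avoids the paper's slightly awkward phrasing that $\frac{3}{2}$ is ``the highest expectation they can obtain'' (the unstable $(\gls{honest},\gls{honest})$ profile pays more). The trade-off is that the paper's narrative argument better conveys the intended intuition about uncertainty over the opponent's character, whereas yours substitutes a formal solution concept for the informal word ``reasonable.''
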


\begin{proof}
	(Informal) This situation might contain dynamic properties and vague interpretation.
	\begin{compactitem}
		\item Let us say that two honest players occur. Then, they both can play $(\gls{honest},\gls{honest})$ and gain $2$.
		\item If $\phon$ meets a true greedy player then her payoff drops to $1$ in the $(\gls{honest},\gls{greedy})$ profile.
		\item If $\phon$ is uncertain about the character of her opponent, i.e., she expects mixed behavior $\halfhalf$ from her opponent, then her expectation from playing pure \gls{honest} drops to $\frac{3}{2}$.
		The same expectation applies to playing pure \gls{greedy}.
	\end{compactitem}

	\noindent
	From $\phon$'s perspective, mixed behavior $\halfhalf$  guarantees the best stable outcome.
	If $\pmal$ expects $\halfhalf$ behavior from $\phon$, then $\pmal$'s best response is to play the same mixed behavior that establishes \gls{MNE}. %
	The players gain $(\frac{3}{2}, \frac{3}{2})$ in that \gls{MNE}, which
	is the highest expectation they can obtain.
\end{proof}

\paragraph{\textbf{(B) When Some Coordination is Allowed}}
Similarly to Scenario 3 (see \autoref{sec:scenario3}), let us assume $(\gls{honest},\gls{honest})$ agreement to be a common knowledge to both players.
Then, a punishment of the strategy \gls{greedy} played by $\phon$ should bring the game into $(\gls{greedy},\gls{honest})$ profile since \gls{honest} is $\pmal$'s best response to \gls{greedy}.
The honest player factually improves her payoff by punishing her greedy opponent.
Therefore, conclusion from Scenario 3 applies here in the same manner.

\subsubsection{\textbf{Scenario 5}}\label{sec:game-scenario-5}
\paragraph{\textbf{(A) Purely Non-Cooperative Interpretation}}
Payoff functions in Scenario 5 come from our assumptions where players should obtain equal outcomes in profiles $(\gls{honest},\gls{honest})$ and $(\gls{greedy},\gls{greedy})$.
The game is a \textit{Zero-sum game}, meaning that no player can gain more than 100\% profit, regardless of their chosen strategy since the sum of all incoming transaction fees is fixed in any set of rounds.
As a result, the total profit for all players is always "zero" (constant) if they all play \gls{honest} or \gls{greedy} strategy.
This scenario is similar to Scenario 3 (see \autoref{sec:scenario3}).
However, the concept of agreements and punishments loses any sense since $(\gls{honest},\gls{honest})$ profile is not more socially efficient than $(\gls{greedy},\gls{greedy})$.

\begin{claim}
	$(\gls{greedy},\gls{greedy})$ is the sole rational outcome of Scenario 5.
\end{claim}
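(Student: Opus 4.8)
The plan is to reuse the strict-dominance argument from Scenario 3 (see \autoref{sec:scenario3}) and then to explain why, unlike there, repetition offers no route back to $(\gls{honest},\gls{honest})$. First I would check that \gls{greedy} strictly dominates \gls{honest} for each player in the base game $\Gamma$ of \autoref{def-basegame}. Taking $\phon$ as the row player: against an \gls{honest} opponent, \gls{greedy} pays $c$ against \gls{honest}'s $a$, and $c>a$ by assumption; against a \gls{greedy} opponent, \gls{greedy} pays $d$ against \gls{honest}'s $b$, and since $d=a$ while a greedy opponent drains the high-fee transactions $\phon$ would otherwise obtain (so $a>b$, the case $a=b$ being excluded), we have $d>b$. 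The same holds for $\pmal$ by the symmetry of the payoff table.

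Second, one round of elimination of the dominated action \gls{honest} leaves $(\gls{greedy},\gls{greedy})$ as the unique surviving profile, hence the unique \gls{PNE} of the base game; because a strictly dominated action carries zero probability in every Nash equilibrium (pure or mixed), this is also the game's only equilibrium.

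Third --- the point that sets Scenario 5 apart --- I would observe that every profile yields the same social total, $a+a=b+c=c+b=d+d$, so the stage game is constant-sum and $(\gls{honest},\gls{honest})$ is not Pareto-superior to $(\gls{greedy},\gls{greedy})$. There is therefore no cooperative surplus for a grim-trigger agreement to protect, and the Folk-theorem mechanism that sustained $(\gls{honest},\gls{honest})$ in Scenarios 3 and 4 has no foothold here: punishing a deviation costs the collective nothing. Since $(\gls{greedy},\gls{greedy})$ is already a \gls{PNE} of the stage game, it lifts trivially to a \gls{SPNE} of the infinitely repeated game by the characterization recalled above, so it is the sole rational outcome.

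I expect the dominance computation to be routine; the real work is the third step --- arguing rigorously that the constant-sum structure strips away every incentive to coordinate on \gls{honest}, which is precisely what makes the Folk-theorem escape used in the earlier scenarios inapplicable and thus leaves the dominant-strategy profile as the only \emph{rational} outcome under repetition.
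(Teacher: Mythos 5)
Your proposal is correct and follows essentially the same route as the paper: the core step is that \gls{greedy} strictly dominates \gls{honest} (using $c>a$ and $d=a>b$), making $(\gls{greedy},\gls{greedy})$ the unique \gls{PNE}, which is exactly the paper's one-line justification. Your third step --- that the constant-sum structure leaves no cooperative surplus for a Folk-theorem agreement to protect --- matches the paper's surrounding discussion almost verbatim ("the concept of agreements and punishments loses any sense since $(\gls{honest},\gls{honest})$ profile is not more socially efficient than $(\gls{greedy},\gls{greedy})$"), so nothing further is needed.
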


\begin{proof}
	 $(\gls{greedy},\gls{greedy})$ is the unique \gls{PNE} in Scenario 5.
\end{proof}
We might appeal for the responsibility of players who should refrain from playing \gls{greedy} just because such a behavior negatively  influences the reputation/popularity of DAG-Protocol in the long term.
A dilemma of whether to utilize the shared resource in a reasonable or extensive way resembles the classical game-theoretical model called \textit{The Tragedy of Commons}~\cite{miller2003game}.
The honest player might insist on \gls{honest}, but it will only improve $\pmal$'s payoff and damage $\phon$.
That is why the game reaches stability only at $(\gls{greedy},\gls{greedy})$.

In anonymous environments, individual interests are often prioritized over collective interests.
This is because the lack of accountability makes it easier for individuals to act in their self-interest without any concerns about the welfare of the group.
Therefore, collective action and cooperation might be very difficult to achieve in anonymous settings.

\subsection{Summary of Scenarios 1-5}\label{sec:game-conclusions}
Let us view \textsc{DAG-Protocol} as a shared resource between miners, which enables them to earn some money.
Any kind of player may utilize this resource anonymously (by PoW mining).
The idea behind \textsc{DAG-Protocol} claims that rational miners will not deplete this shared resource by extensive greedy play.
If they deplete it, the resource is gone forever since the reputation of \textsc{DAG-Protocol} is destroyed.
Since players are rational, they are not supposed to let this happen.
However, this theory stands on the assumption that this resource is the only job opportunity the miners have.

The question we investigate is whether the \textsc{DAG-Protocol} is immune against greedy behavior.
Intuitively, if it is not, then the resource might be fully depleted.
Since in permissionless blockchains there is no technical way to stop the entrance of a greedy player, she might join \textsc{DAG-Protocol}.
If the greedy behavior offers a better payoff (even temporary) then greedy miners might parasite on \textsc{DAG-Protocol}.
Let us summarize our findings regarding the immunity of \textsc{DAG-Protocol} against greedy behavior.

\begin{claim}
	\label{cl-notimm}
	\textsc{DAG-Protocol} is not a mechanism immune against greedy behavior.
\end{claim}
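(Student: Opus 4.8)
The plan is to read \autoref{cl-notimm} as the aggregate corollary of the five scenario analyses, so I first pin down what ``immune against greedy behavior'' must mean in this model: the cooperative profile $(\gls{honest},\gls{honest})$ should be a self-enforcing stable outcome of the infinitely repeated \textsc{DAG-Protocol} game, i.e.\ either a \gls{PNE} of the base game $\Gamma$, or a profile sustainable as a \gls{SPNE} through Folk-theorem punishment under a realistic discount factor. I would therefore prove the claim in contrapositive form: under the basic constraints $c>a$ and $c>b$ (with the degenerate case $a=b$ excluded), which the preceding analysis partitions into Scenarios 1--5, the pure honest profile fails to be such a stable outcome in \emph{every} scenario. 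Because a permissionless, anonymous setting cannot technically bar a greedy entrant, establishing this non-stability uniformly is exactly the negation of immunity: a greedy mutant can always invade and either profit from or go unpunished for deviating.

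First I would dispatch the purely non-cooperative reading. In Scenarios 1, 2, 3, and 5 the already-established claims give that $(\gls{greedy},\gls{greedy})$ is the unique \gls{PNE}, so \gls{greedy} (weakly or strictly) dominates \gls{honest} and the honest profile is not a stage-game equilibrium, hence not a \gls{SPNE} by the trivial route. Scenario 4 is the only structurally different case: there the honest profile is likewise not a \gls{PNE}, and the best stable outcome is the mixed profile $\bracks{\halfhalf}{\halfhalf}$ rather than pure \gls{honest}, as shown in the corresponding claim. Collecting these, pure honest \gls{RTS} is never a base-game equilibrium across the admissible payoff space.

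The main obstacle, and the step I would treat most carefully, is closing the Folk-theorem escape route, since this is precisely where Inclusive's \autoref{hypo:game1} locates stability. Here I would argue case by case that the sought self-enforcing principle does not exist: in Scenario 3, sustaining $(\gls{honest},\gls{honest})$ requires $\delta>\tfrac{1}{2}$ via \autoref{deltaCond}, a threshold that only grows harder to meet as fee variance (hence the temptation payoff $c$) increases and which anonymous miners with no long-horizon stake need not satisfy; in Scenario 4 punishment is structurally void, because \gls{honest} is $\pmal$'s best response to \gls{greedy}, so the grim-trigger ``punishment'' $(\gls{greedy},\gls{honest})$ actually rewards the deviator; and in Scenario 5 the agreement/punishment machinery is meaningless, since $(\gls{honest},\gls{honest})$ is no more socially efficient than $(\gls{greedy},\gls{greedy})$ in a zero-sum fee pool. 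I expect the delicacy to lie in making ``realistic discount factor'' and ``anonymous permissionless entry'' precise enough that this is a clean impossibility rather than a merely rhetorical one. Combining the two readings then finishes the proof: since in no admissible scenario is pure \gls{honest} either a base-game \gls{PNE} or a Folk-theorem-sustainable \gls{SPNE}, the honest population admits a profitable, unpunishable greedy deviation, so \textsc{DAG-Protocol} is not immune against greedy behavior---which, as a by-product, contradicts \autoref{hypo:game1} and hence the equilibrium proof of Inclusive.
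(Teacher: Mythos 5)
Your route is essentially the paper's: show that $(\gls{honest},\gls{honest})$ is not a base-game \gls{PNE} in any of Scenarios 1--5 (with $(\gls{greedy},\gls{greedy})$ the unique \gls{PNE} in 1, 2, 3, 5 and only the mixed profile $\bracks{\halfhalf}{\halfhalf}$ stable in 4), and then argue that the Folk-theorem route to sustaining $(\gls{honest},\gls{honest})$ as a \gls{SPNE} is not viable. The non-cooperative half is correct and matches the paper. The gap is in your Scenario 4 cooperative branch: you assert that because \gls{honest} is $\pmal$'s best response to \gls{greedy}, the grim-trigger punishment profile $(\gls{greedy},\gls{honest})$ ``rewards the deviator'' and punishment is structurally void. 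That is backwards. In that profile the \emph{punisher} $\phon$ gets $c=3$ while the deviator $\pmal$ gets $b=1$, which is strictly below both the agreement payoff $a=2$ and the temptation payoff $c=3$; the punishment is therefore credible (it even benefits the punisher, as the paper notes) and yields the same deterrence threshold $\delta=\frac{c-a}{c-b}=\frac{1}{2}$ as Scenario 3. The paper accordingly does not claim punishment fails structurally in Scenario 4 --- it says the Scenario 3 conclusion carries over unchanged.

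This matters because it shifts where the burden of the proof lies. The paper closes the Folk-theorem escape route not by the discount-factor threshold alone (which $\delta>\frac{1}{2}$ does not obviously violate) but by three practical objections that your proposal only gestures at: (a) \gls{greedy} play is not observable and can only be guessed from payoff statistics, so the trigger cannot fire reliably; (b) the deviator can escape punishment by leaving and mining elsewhere, an option the Folk theorem does not model and which the Inclusive analysis omits; and (c) the only available punishment is itself \gls{greedy} play, so executing it delivers the system to $(\gls{greedy},\gls{greedy})$ anyway. Your proof needs these (or an equivalent impossibility argument) to legitimately reject Hypothesis~\autoref{hypo:game1} in Scenarios 3 and 4; with the Scenario 4 step corrected and these points made explicit, your argument coincides with the paper's.
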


\noindent
We examined the \textsc{DAG-Protocol} using five hypothetical scenarios and found out that:
\begin{compactenum}
	\item
	The $(\gls{honest},\gls{honest})$ profile \textbf{is not} a (base game's) \gls{PNE} in Scenario 1-5.
	Contrary, the profile $(\gls{greedy},\gls{greedy})$ \textbf{is} \gls{PNE} in Scenarios 1, 2, 3 and 5.
	In Scenario 4, the players get the best achievable expected payoff in mixed behavior $(\frac{1}{2}, \frac{1}{2})$, i.e., when choosing their transactions randomly in half of the cases and pick the most valued ones in the second half of the cases.
	Such dynamics could look like a general $\gls{honest}$-behavior of \textsc{DAG-Protocol}.
	However, it is not, because the probability of $(\gls{honest},\gls{honest})$ profile is only $\frac{1}{4}$.
	In $\frac{3}{4}$ of cases, there is at least one miner playing $\gls{greedy}$.

	\item
	In Scenarios 3 and 4, stability in $(\gls{honest},\gls{honest})$ profile can be achieved; however, it puts rather critical demands on the community of miners.
	They can theoretically enforce a greedy player to return into $(\gls{honest},\gls{honest})$ by punishing her in $(\gls{greedy},\gls{honest})$ or $(\gls{greedy},\gls{greedy})$ profiles.
	A rational player, who \emph{wants to stay} or \emph{must stay} in this repeated game forever, agrees upon $(\gls{honest},\gls{honest})$ if (1) her discount factor is higher than a certain gap (see \autoref{deltaCond}) and (2) punishing response from the community is guaranteed.
	The gap might also fluctuate depending on the current distribution of the transaction fees.\footnote{A distribution of the transaction fees is not the subject of a game-theoretical analysis but empirical evaluation presented later (see \autoref{sec:evaluation}).}
	Nevertheless, the practical implementation of \textbf{punishments} has serious drawbacks:
	\begin{compactenum}

		\item Honest player can detect \gls{greedy}-behavior only theoretically.
		In practical operation, the players can only guess from their previous payoffs that there is probably someone playing \gls{greedy} in the system.

		\item Greedy player can avoid punishment when she skips successive rounds and gains by doing something else (saving costs, mining on different blockchain, etc.).
		The Folk theorem applied here does not assume that the player can escape from the punishment.

		\item Finally, the principle of punishment is to execute the \gls{greedy}-behavior, which brings us to $(\gls{greedy},\gls{greedy})$ at the end.
		There is no other more suitable tool for that.
		Basically, the honest player says "do not play \gls{greedy}, otherwise, I will play \gls{greedy} as well".
	\end{compactenum}

	\item
	Scenario 5 is based on the assumption that a Zero-sum game is the natural conclusion of \gls{pow} mining.
	Players gain equally in $(\gls{honest},\gls{honest})$ and $(\gls{greedy},\gls{greedy})$ profiles.
	The honest player risks a loss when playing \gls{honest} against the greedy player.
	This makes the $(\gls{greedy},\gls{greedy})$ profile the only stable and rational outcome of this scenario.
	Scenario~5 has a strategic character of Tragedy of the Commons, where depletion of the shared resource is inevitable.
\end{compactenum}

\begin{corollary}
We conclude that Hypothesis \autoref{hypo:game1} is not valid. The $(\gls{honest},\gls{honest})$ profile is not a \gls{PNE} in any of our scenarios. Incentives enforcing \gls{honest}-behavior are hardly feasible in the anonymous (permissionless) environment of blockchains.
A community of honest miners can follow the \textsc{DAG-Protocol} until the attacker appears. The attacker playing the \gls{greedy} strategy can parasite on the system and there is no defense against such a behavior (since greedy miners can leave the system anytime and mine elsewhere, which is not assumed in~\cite{lewenberg2015inclusive}). 
Therefore, \gls{honest} is not an \emph{evolutionary stable strategy} \cite{smith_1982}, and thus \gls{honest} does not constitute a stable equilibrium.
\end{corollary}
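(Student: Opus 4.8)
The plan is to assemble the per-scenario results already established into the three assertions of the statement: that $(\gls{honest},\gls{honest})$ is never a base-game \gls{PNE}, that it cannot be rescued as an \gls{SPNE} of the infinitely repeated game in the permissionless setting, and that \gls{honest} fails the evolutionary-stability test. First I would collect the conclusions obtained for each scenario and extract their common cause. Under the standing constraints $c>a$ and $c>b$, the best-response computation against an \gls{honest} opponent gives $a$ for \gls{honest} and $c$ for \gls{greedy}; since $c>a$, a unilateral switch to \gls{greedy} strictly improves the deviator's payoff. Hence $(\gls{honest},\gls{honest})$ is not a best response to itself and therefore not a \gls{PNE} in any of Scenarios 1--5. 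This is exactly the strict-dominance argument used in Scenarios 1--3 and 5 and the anti-coordination argument of Scenario 4, so no new work is needed beyond quoting them and isolating $c>a$ as the decisive inequality.

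The harder part, and the one I expect to be the main obstacle, is ruling out $(\gls{honest},\gls{honest})$ as an \gls{SPNE}, because the Folk theorem permits stage-game non-equilibria to be sustained in infinitely repeated play. Here I would rely on the dichotomy stated before \autoref{tab:game-base}: a profile is an \gls{SPNE} either because it is a stage-game \gls{PNE} (already excluded) or because a credible self-enforcing punishment supports it. I would then show that every route to the latter fails in the anonymous, permissionless model. The grim-trigger construction presupposes three properties that do not hold: (i) $\phon$ must identify the deviator, whereas a miner observes only its own realized payoffs and can at best infer that \emph{some} \gls{greedy} play is occurring; (ii) the punished player must be unable to escape, whereas $\pmal$ can skip rounds or mine on another chain, evading the punishment the Folk theorem assumes to be inescapable; and (iii) the punishment itself consists of switching to \gls{greedy}, which drives the system to $(\gls{greedy},\gls{greedy})$ rather than restoring cooperation. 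Even granting identifiability, the threshold discount factor of \autoref{deltaCond} (e.g., $\delta=\tfrac{1}{2}$ in Scenario 3) must be exceeded, and with non-negligible variance in transaction fees the one-shot gain from \gls{greedy} makes this increasingly implausible. Concluding that no self-enforcing principle supports $(\gls{honest},\gls{honest})$ yields that it is not an \gls{SPNE}, contradicting Hypothesis~\autoref{hypo:game1}.

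Finally, for evolutionary stability I would invoke the standard ESS criterion: a strategy is an ESS only if it is a best response to itself, and, in the boundary case of equal payoffs, strictly out-earns any mutant against that mutant. Since a \gls{greedy} mutant facing the resident \gls{honest} population earns $c>a$, the mutant strictly out-earns the incumbents, so the first ESS condition already fails and the boundary case need not even be examined. Thus \gls{honest} is invadable by the \gls{greedy} mutant and is not an evolutionarily stable strategy, which completes the argument that \gls{honest} does not constitute a stable equilibrium. The weight of the whole proof therefore rests on the second step, since the gap between ``not a stage-game \gls{PNE}'' and ``not an \gls{SPNE}'' is precisely where the Folk theorem could in principle save cooperation; the crux is establishing that the anonymity and exit options of permissionless mining void each Folk-theorem precondition.
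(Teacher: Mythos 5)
Your proposal is correct and follows essentially the same route as the paper: it assembles the per-scenario conclusions, isolates $c>a$ as the reason a unilateral deviation to the greedy strategy breaks $(\gls{honest},\gls{honest})$ as a stage-game \gls{PNE}, dismisses the Folk-theorem rescue via the same three practical failures (undetectable deviation, escapable punishment, punishment being greedy play itself) together with the discount-factor threshold, and concludes non-ESS from invadability by a greedy mutant. Your treatment of the ESS step is marginally more explicit than the paper's (which asserts it without invoking the formal criterion), but the substance is identical.
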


\section{Simulation Model}
\label{sec:verification}
We created a simulation model to conduct various experiments investigating the behavior of \textsc{DAG-Protocol} under incentive attacks related to the problems identified in \autoref{sec:problem} and thus Hypothesis~\autoref{hypo:problem-definition}.
Some experiments were designed to provide empirical evidence for %
the conclusions from \autoref{sec:gametheory}.

\subsection{\textbf{Abstraction of \textsc{DAG-Protocol}}}
For evaluation purposes, we simulated the \textsc{DAG-protocol} (with \gls{RTS}) by modeling the following aspects:
\begin{compactitem}
	\item  All blocks in DAG are deterministically ordered. 
	\item  The mining rewards consist of transaction fees only.
	\item  A fee of a particular transaction is awarded only to a miner of the block that includes the transaction as the first one in the sequence of totally ordered blocks. 
\end{compactitem}
Also, in terms of PHANTOM/GHOSTDAG terminology, we generalize and do not reduce transaction fees concerning the delay from ``appearing'' of the block until it is strongly connected to the DAG.
Hence, we utilize \gls{discount-function} = 1. %
In other words, for each block \gls{phantom-block}, the discount function does not penalize a block according to its gap parameter \(\gls{gap-parameter}(\gls{phantom-block})\), i.e. \(\gls{discount-function}(\gls{gap-parameter}(\gls{phantom-block})) = 1\).
Such a setting is optimistic for honest miners and maximizes their profits from transaction fees when following the \gls{RTS} strategy.
This abstraction enables us to model the concerned problems of considered \textsc{DAG-Protocols} (see \autoref{sec:dag-oriented-solutions}).

\subsection{(Simple) Network Topology}\label{sec:network-topology-simple}
We created a simple network topology that is convenient for proof-of-concept simulations and encompasses some important aspects of the real-world blockchain network.
In particular, we were interested in emulating the network propagation delay \gls{network-propagation-delay} to be similar to in Bitcoin (i.e., $\sim5s$ at most of the time in 2022), but using a small ring topology.
To create such a topology, we assumed that the Bitcoin network contains \(7592\) nodes, according to the snapshot of reachable Bitcoin nodes found on May 24, 2022.\footnote{\url{https://bitnodes.io/nodes/}}
In Bitcoin core, the default value of the consensus node's peers is set to  8 (i.e., the node degree).\footnote{Nevertheless, the node degree is often higher than~8 in reality~\cite{mivsic2019modeling}.} 
Therefore, the maximum number of hops that a gossiped message requires to reach all consensus nodes in the network is $\sim4.29$ (i.e., $log_8(7592)$).
Moreover, if we were to assume $2-3x$ more independent blockchain clients (that are not consensus nodes), then this number would be increased to $4.83$--$4.96$.
To model this environment, we used the ring network topology with 10 consensus nodes (see \autoref{network-topology}), which sets the maximum value of hops required to propagate a message to~$5$.
Next, we set the inter-node propagation delay $\partial \tau$ to $1s$, which fits assumed \gls{network-propagation-delay} (i.e., 5s / 5 hops = 1s).
Later, we will create more complex network topology (see \autoref{sec:complex-topology-settings}).

\begin{figure}[t]
	\centering
	\includegraphics[trim={3.2cm 0 0 0}, width=.35\linewidth]{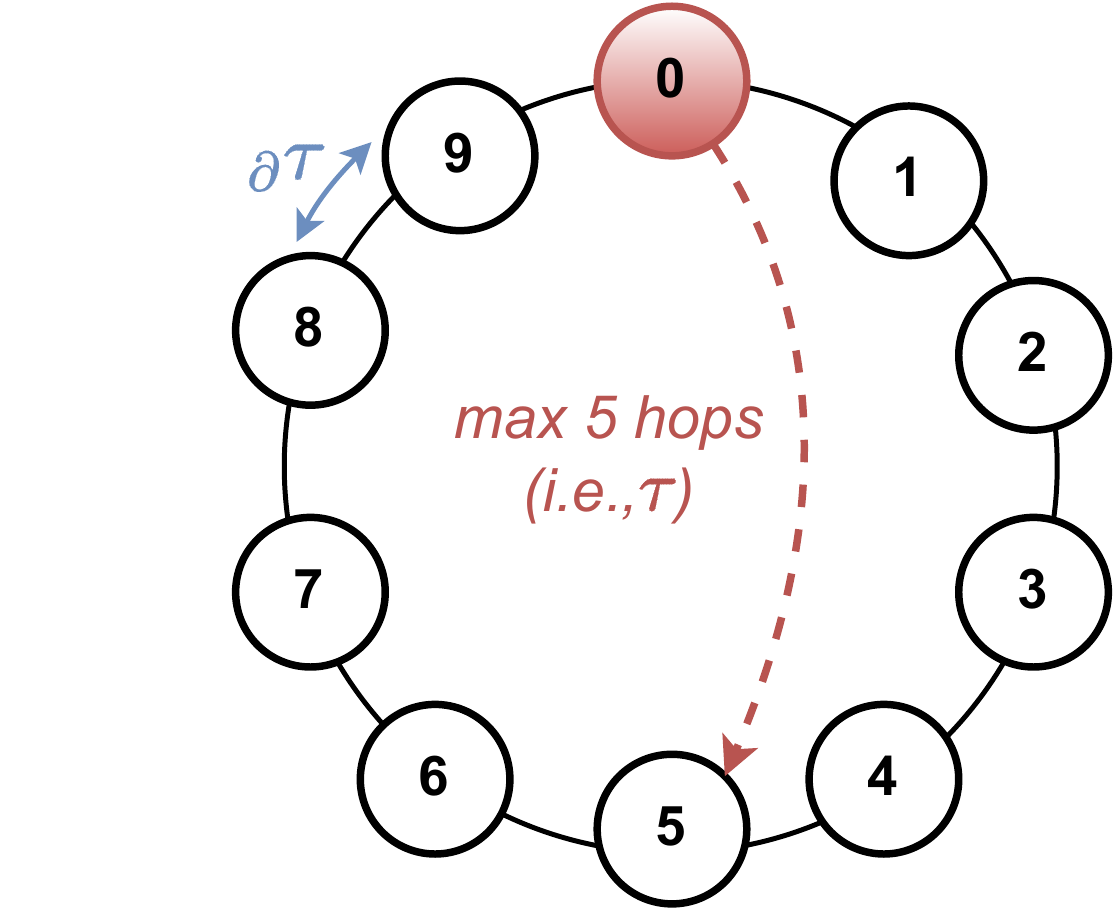}
	\caption{The simple network topology used in our simulations.}
	\label{network-topology}
\end{figure}

\subsection{Simulator}
There are many simulators~\cite{paulavivcius2021systematic} that model blockchain protocols, mainly focusing on network delays, different consensus protocols, and behaviors of specific attacks (e.g., SimBlock~\cite{Aoki2019SimBlockAB}, Blocksim~\cite{BlockSim:Alharby}, Bitcoin-Simulator~\cite{sim_bitcoin-simulator}, etc.).
However, none of these simulators was sufficient for our purposes due to missing support for multiple chains (or their abstraction) and incentive schemes assumed in \textsc{DAG-protocols}.
To verify Hypothesis \autoref{hypo:problem-definition}, we built a simulator that focuses on the mentioned problems of \textsc{DAG-protocols}.
In detail, we started with the Bitcoin mining simulator~\cite{gavinsimulator}, which is a discrete event simulator for the PoW mining on a single chain, enabling a simulation of network propagation delay within a specified network topology.

We extended this simulator to support \textsc{DAG-Protocol}s, enabling us to monitor transaction duplicity, throughput, and relative profits of miners with regard to their mining power.
The simulator is written in \verb!C++!.
The implementation utilizes the Boost library~\cite{boost} for better performance and the special structures for simulation, such as the multi-index mempool~\cite{boost2}, enabling effective management of the mempool in the case of any transaction selection strategy.\footnote{Greedy transaction strategy requires a mempool with transactions ordered by fees, while \gls{RTS} strategy requires the hash-map data structure. Therefore, it is challenging to efficiently utilize them at the same time.}

In addition, we added more simulation complexity to simulate each block, including the particular transactions (as opposed to simulating only the number of transactions in a block~\cite{gavinsimulator}). %
Most importantly, we implemented two different transaction selection strategies -- greedy and random.
For demonstration purposes, we implemented the exponential distribution of transaction fees in mempool, based on several graph cuts of fee distributions in mempool of Bitcoin from~\cite{bitcoin-mempool-stats}.\footnote{Distribution of transaction fees in mempool might change over time; however, it mostly preserves the low number of high-fee transactions in contrast to the higher number of low-fee transactions, which is common with the exponential distribution.}
Our simulator is available at \url{https://www.dropbox.com/s/vqpgqqy01qh1pcv/}.

\begin{figure}[t]
	\centering
	\includegraphics[width=0.8\linewidth]{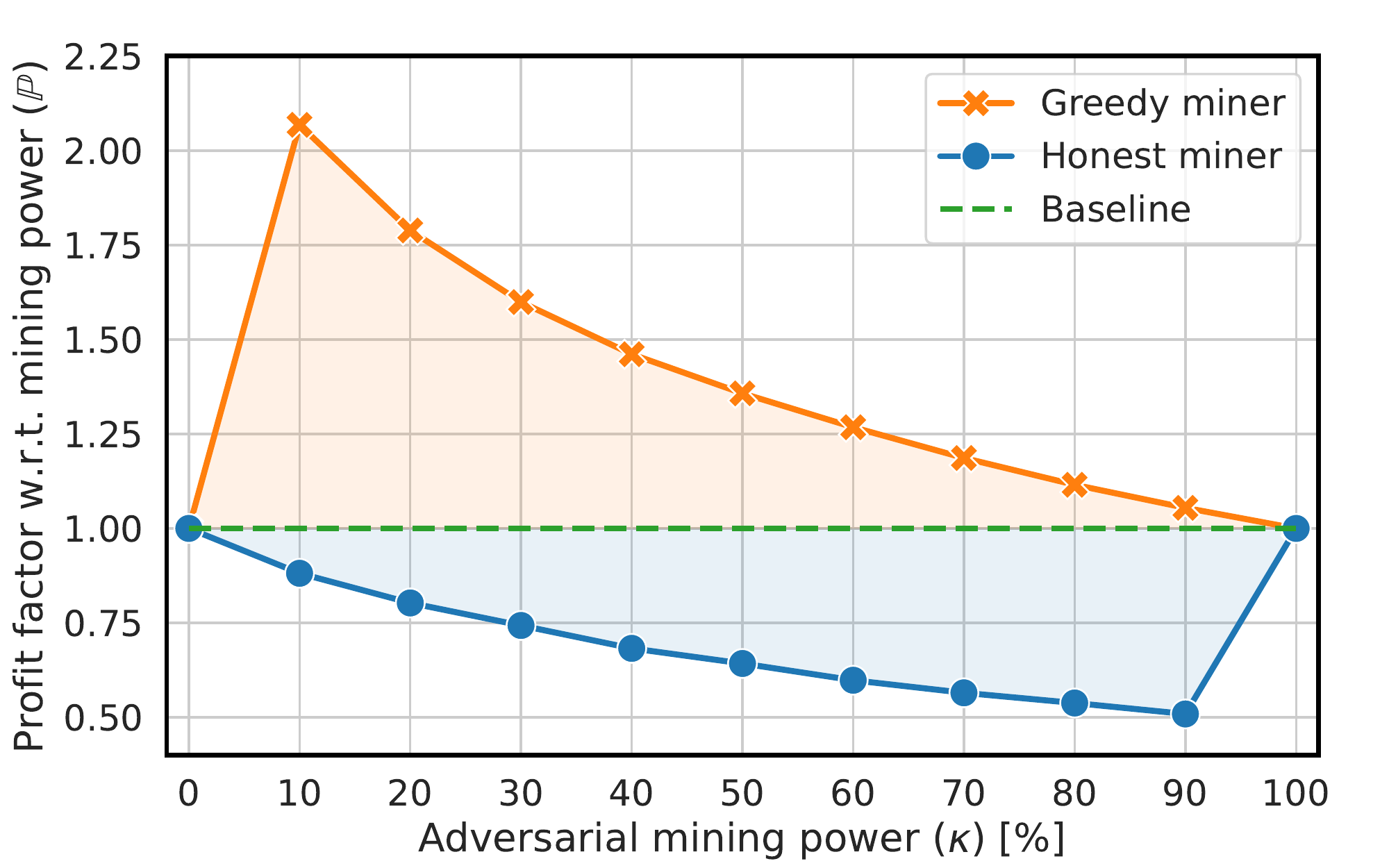}
	\caption{The profit factor $\mathbb{P}$ of an honest vs. a greedy miner with their mining powers of 100\% - \gls{adversarial-mining-power} and \gls{adversarial-mining-power}, respectively.
		The baseline shows the expected $\mathbb{P}$ of the honest miner; \(\gls{block-creation-rate}=20s\).}
	\label{fig:duel-profit}
\end{figure}

\section{Evaluation}
\label{sec:evaluation}

We designed a few experiments with our simulator, which were aimed at investigating the relative profit of greedy miners and transaction collision rate (thus throughput) to investigate Hypothesis~\autoref{hypo:problem-definition}.
In all experiments, honest miners followed the \gls{RTS},
while greedy miners followed the greedy strategy.
Unless stated otherwise, the block creation time was set to \(\gls{block-creation-rate} = 20s\). %
However, we abstracted from \gls{network-propagation-delay} of transactions and ensured that the mempools of nodes were regularly filled (i.e., every 60s) by the same set of new transactions, while the number of transactions in the mempool was always sufficient to fully satisfy the block capacity that was set to 100 transactions.
We set the size of mempool equal to 10000 transactions, and thus the ratio between these two values is similar to Bitcoin~\cite{bitcoin-mempool-stats} in common situations.
In all experiments, we executed multiple runs and consolidated their results; however, in all experiments with the simple topology, the spread was negligible, and therefore we do not depict it in graphs. 

\subsection{Experiment I}\label{sec:experiment-1}
\paragraph{\textbf{Goal}}
The goal of this experiment was to compare the relative profits earned by two miners/phenotypes in a network, corresponding to our game theoretical settings (see \autoref{sec:gametheory}). %
Thus, one miner was greedy and followed the greedy strategy, while the other one was honest and followed the \gls{RTS}.

\paragraph{\textbf{Methodology and Results}}
The ratio of total mining power between the two miners was varied with a granularity of \(10\%\), and the network consisted of 10 miners, where only the two miners had assigned the mining power.
Other miners acted as relays, emulating the maximal network delay of 5 hopes between the two miners in a duel.
The relative profits of the miners were monitored in terms of their profit factor \(\mathbb{P}\) w.r.t. their mining power.
We conducted 10 simulation runs and averaged their results (see \autoref{fig:duel-profit}).
In all simulation runs, the greedy miner earned a profit disproportionately higher than her mining power, while the honest miner's relative profit was negatively affected by the presence of the greedy miner.
We can observe that \(\mathbb{P}\)  of greedy miner was indirectly proportional to her \gls{adversarial-mining-power}, which was caused by the exponential distribution of transaction fees that contributed more significantly to the higher \(\mathbb{P}\) of a smaller miner.
In sum, this experiment showed that the profit advantage of the greedy miner aligns with the conclusions from the game theoretical model, and its Scenario 5 (see \autoref{sec:game-scenario-5}) in particular, which represents the case of \gls{adversarial-mining-power} = 50\%.
Nevertheless, our results indicate that the greedy strategy is more profitable than the \gls{RTS} for any non-zero \gls{adversarial-mining-power}.

\begin{figure}[t]
	\centering
	\includegraphics[width=0.8\linewidth]{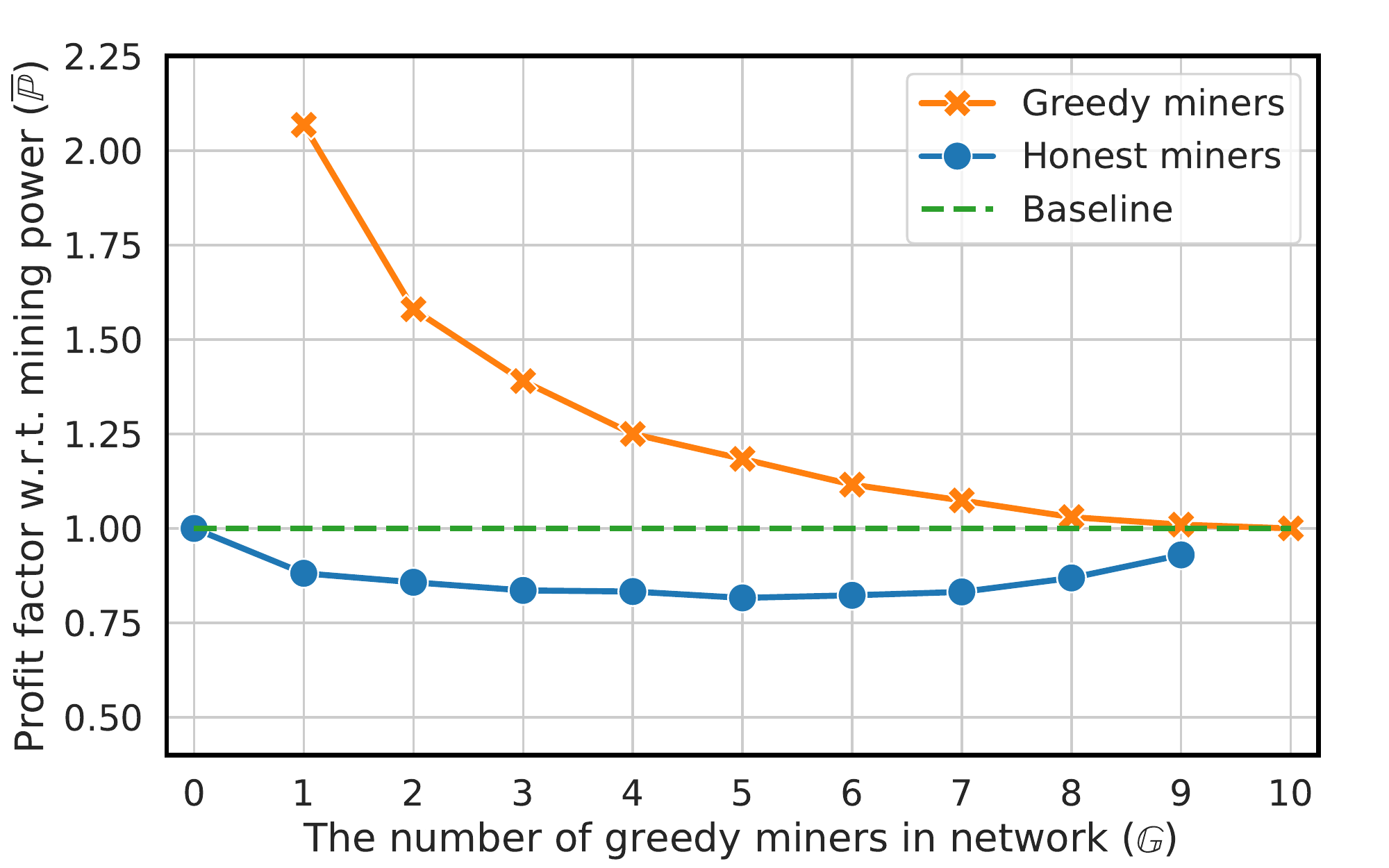}
	\caption{The averaged profit factor $\overline{\mathbb{P}}$ per honest miner and greedy miner, each with \(10\%\) of mining power.
		The number of honest miners is $10$ - \gls{cnt-malicious-miners}.
		The baseline shows the expected $\overline{ \mathbb{P}}$ of an honest miner with \(10\%\) of mining power; \(\gls{block-creation-rate}=20s\).}
	\label{fig:malicious-miners-earn-more-profit} %
\end{figure}

\begin{figure*}[t]
	\centering
	\begin{subfigure}[t]{0.325\textwidth}
		\centering
		\includegraphics[width=\textwidth]{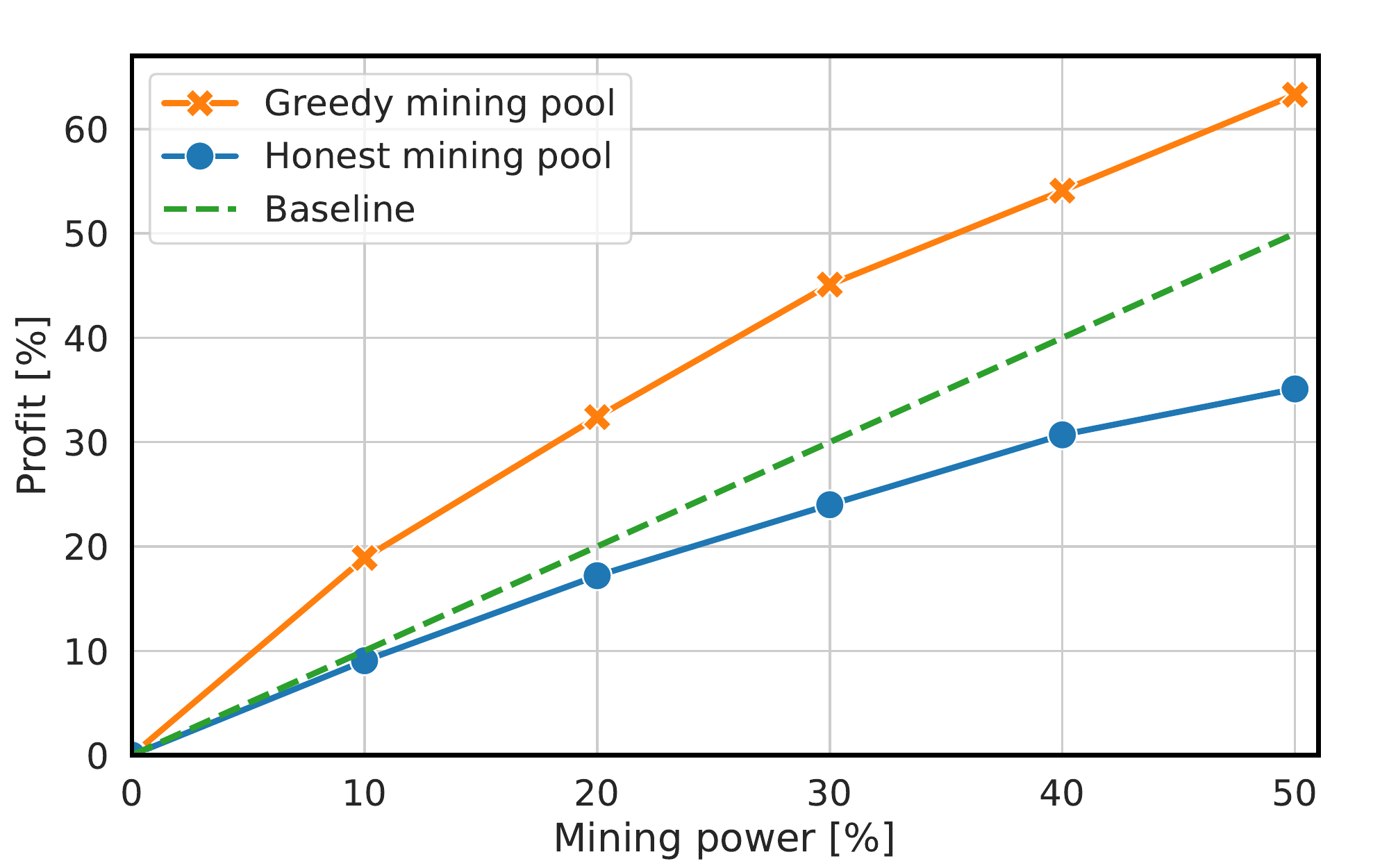}
		\caption{The relative profit of the honest pool and the greedy pool, both with equal mining power (i.e.,~\gls{adversarial-mining-power}), w.r.t. the total mining power of the network.
			The baseline shows the expected profit of the honest mining pool, and \(\gls{block-creation-rate}=20s\).}
		\label{fig:duel-miners-earn-profit}
	\end{subfigure}
	\hfill
	\begin{subfigure}[t]{0.325\textwidth}
		\centering
		\includegraphics[width=\textwidth]{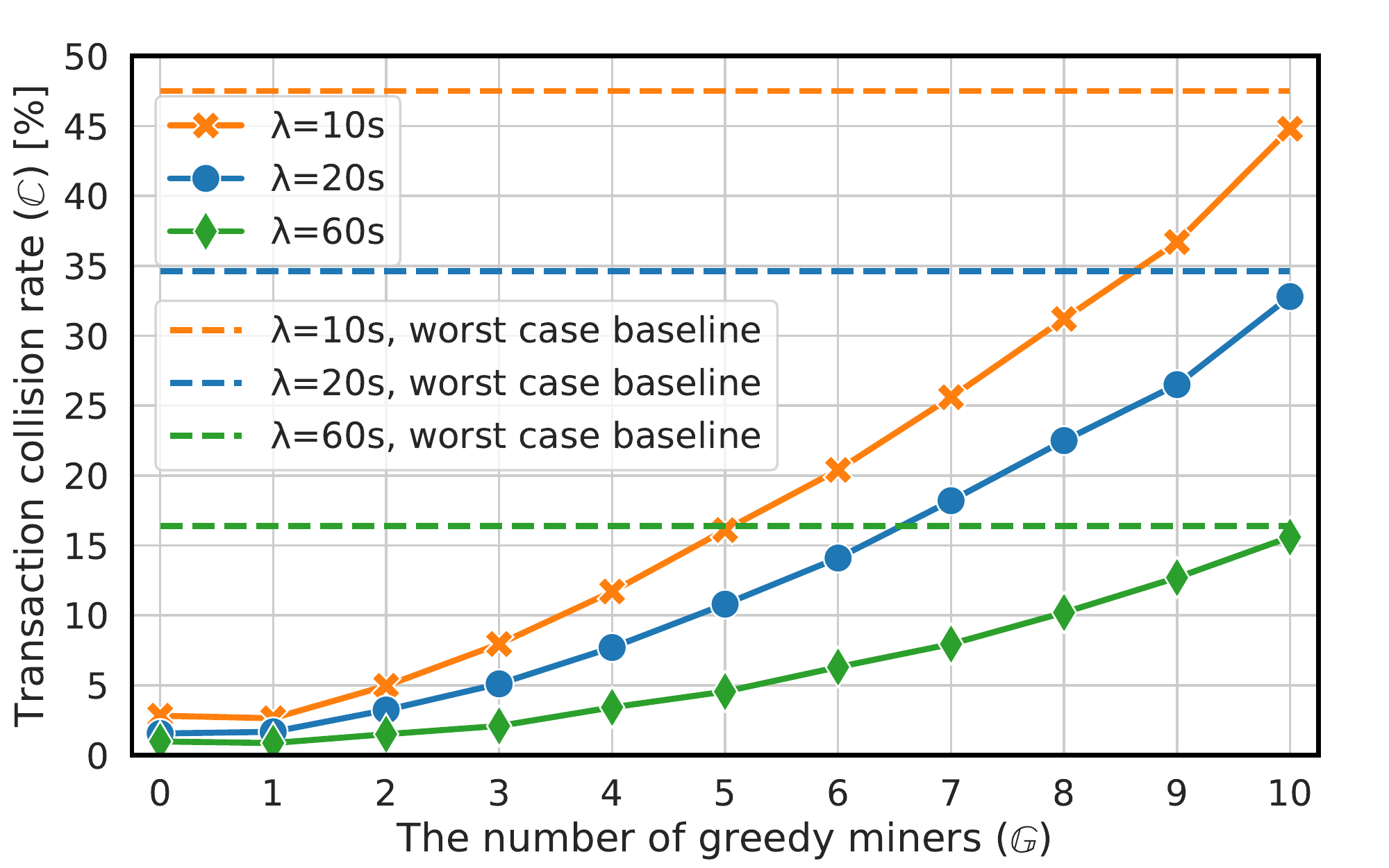}
		\caption{The transaction collision rate $\mathbb{C}$ w.r.t. \# of greedy miners \gls{cnt-malicious-miners} (each with \gls{adversarial-mining-power} = \(10\%\)), where \# of	 honest miners was $10 - \gls{cnt-malicious-miners}$ and \gls{block-creation-rate} $\in \{10s, 20s, 60s\}$.
			The worst case baseline shows $\mathbb{C}$ when all transactions are duplicates.}
		\label{fig:transaction-collision-rate-proportional-to-amount-of-malicious-miners}
	\end{subfigure}
	\hfill
	\begin{subfigure}[t]{0.325\textwidth}
		\centering
		\includegraphics[width=\textwidth]{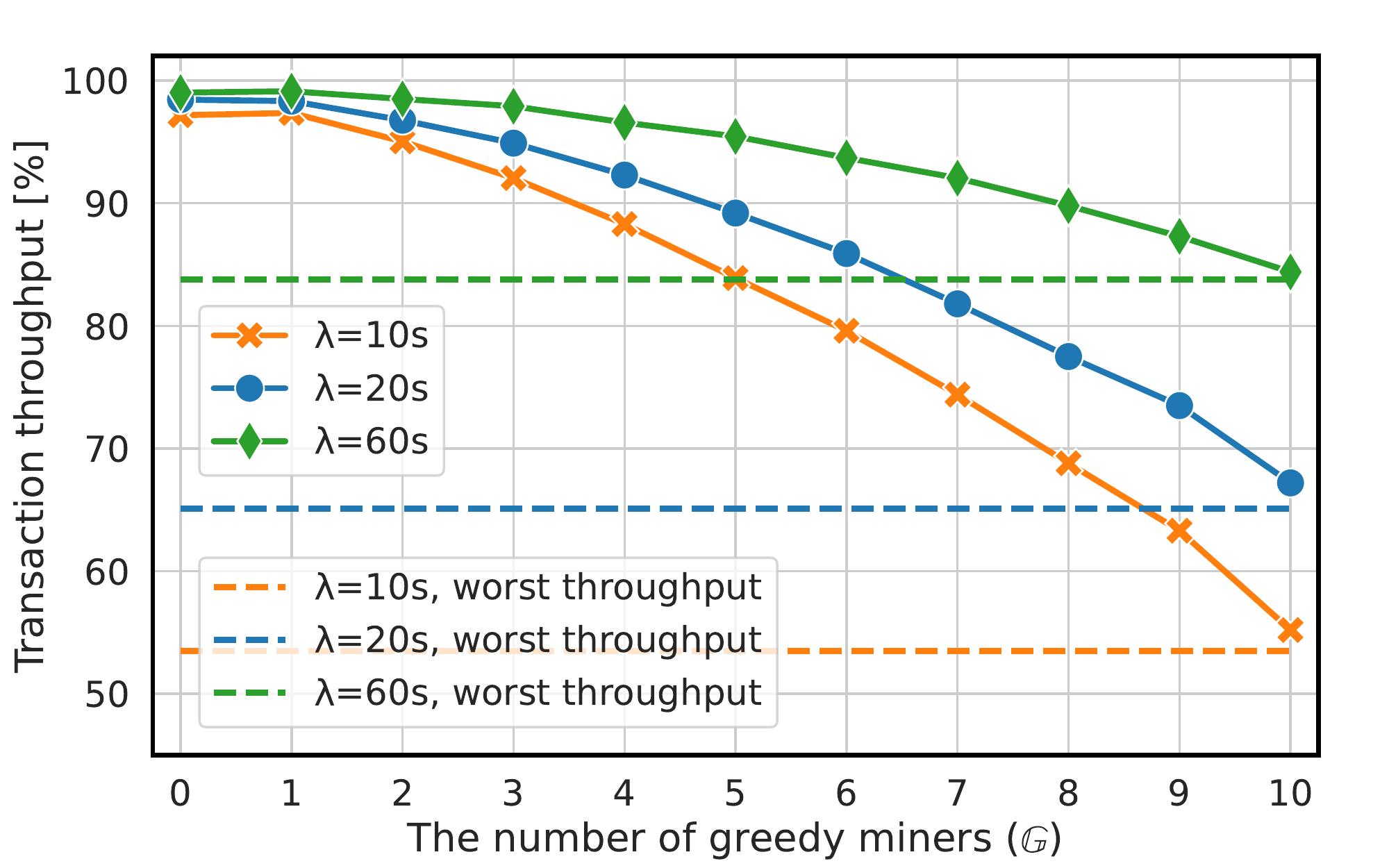}
		\caption{The throughput of the network w.r.t. \# of greedy miners \gls{cnt-malicious-miners} (each with \gls{adversarial-mining-power} = 10\%), expressed as the percentage of non-duplicate transactions mined.}
		\label{fig:throughput}
	\end{subfigure}
	\caption{Experiment III (i.e., duel of mining pools) and Experiment IV (i.e., transaction collision rate \& throughput).}
	\label{big-figure}
	\vspace{-0.5cm}
\end{figure*}

\subsection{Experiment II}\label{sec:experiment-2}
\paragraph{\textbf{Goal}}
The goal of this experiment was investigation of the relative profits of a few greedy miners following the greedy strategy in contrast to honest miners following the \gls{RTS}.

\paragraph{\textbf{Methodology and Results}}
We experimented with 10 miners, where
the number of greedy miners $\gls{cnt-malicious-miners}$ vs. the number of honest miners (i.e., 10 - \gls{cnt-malicious-miners}) was varied, and each held \(10\%\) of the total mining power.
We monitored their profit factor $\overline{\mathbb{P}}$ averaged per miner. %

We conducted 10 simulation runs and averaged their results (see \autoref{fig:malicious-miners-earn-more-profit}).
Alike in \autoref{sec:experiment-1}, we can see that greedy miners earned profit disproportionately higher than their mining power.
Similarly, this experiment showed that the profit advantage of greedy miners decreases as their number increases.
This is similar to increasing \gls{adversarial-mining-power} in a duel of two miners from \autoref{sec:experiment-1}; however, in contrast to it, $\overline{\mathbb{P}}$ of greedy miners is slightly lower with the same total \gls{adversarial-mining-power} of all greedy miners, while  $\overline{\mathbb{P}}$ of honest miners had not suffered with such a decrease.
Intuitively, this happened because multiple greedy miners increase transaction collision.
In detail, since miners are only rewarded for transactions that were first to be included in a new block, the profit for the second and later miners is lost if a duplicate transaction is included.

This observation might be seen as beneficial for the protocol as it disincentivizes multiple miners to use the greedy transaction selection strategy, which would support the sequential equilibrium from Inclusive protocol~\cite{lewenberg2015inclusive}.
However, as we mentioned in \autoref{sec:dag-solutions:inclusive}, the authors of the Inclusive protocol assume no cooperating players, which is unrealistic since miners can cooperate and create the pool to avoid collisions and thus maximize their profits (resulting in a similar outcome, as in \autoref{sec:experiment-1}).
To further investigate the profits of mining pools, we performed another experiment as follows.

\subsection{Experiment III}\label{sec:exp-2.5}
\paragraph{\textbf{Goal}}
The goal of this experiment was to investigate the relative profit of the greedy mining pool depending on its \gls{adversarial-mining-power} versus the honest mining pool with the same mining power.
It is equivalent to Scenario 5 of game theoretical analysis (see \autoref{sec:game-scenario-5}) although there is the honest rest of the network.

\paragraph{\textbf{Methodology and Results}}
We experimented with 10 miners, and out of them, we choose one greedy miner and one honest miner, both having equal mining power, while the remaining miners in the network were honest and possessed the rest of the network's mining power.
In other words, we emulated a duel of the greedy mining pool versus the honest mining pool.
We conducted 10 simulation runs and averaged their results (see \autoref{fig:duel-miners-earn-profit}).
The results demonstrate that the greedy pool's relative earned profit grows proportionally to \gls{adversarial-mining-power} as compared to the honest pool with equal mining power, supporting our conclusions from \autoref{sec:gametheory}.

\subsection{Experiment IV}\label{sec:exp-3}
\paragraph{\textbf{Goal}}
The goal of this experiment was to investigate the transaction collision rate under the occurrence of greedy miners who selected transactions using the greedy strategy.

\paragraph{\textbf{Methodology and Results}}
In contrast to the previous experiments, we considered three different values of block creation time (\gls{block-creation-rate} $ \in \{10s, 20s, 60s\}$).
We experimented with 10 miners, where
the number of greedy miners $\gls{cnt-malicious-miners}$ vs. the number of honest miners (i.e., 10 - \gls{cnt-malicious-miners}) was varied, and each held \(10\%\) of the total mining power.
For all configurations, we computed the transaction collision rate (see \autoref{fig:transaction-collision-rate-proportional-to-amount-of-malicious-miners}).
We can see that the increase of $\gls{cnt-malicious-miners}$ causes the increase in the transaction collision rate.
Note that lower \gls{block-creation-rate} has a higher impact on the collision rate, and DAG protocols are designed with the intention to have small \gls{block-creation-rate} (i.e., even smaller than \gls{network-propagation-delay}).
Consequently, the increased collision rate affected the overall throughput of the network (see \autoref{fig:throughput}, which is complementary to \autoref{fig:transaction-collision-rate-proportional-to-amount-of-malicious-miners}).

\begin{figure*}[t]
	\centering
	\begin{subfigure}[b]{0.35\textwidth}
		\centering
		\includegraphics[width=\textwidth]{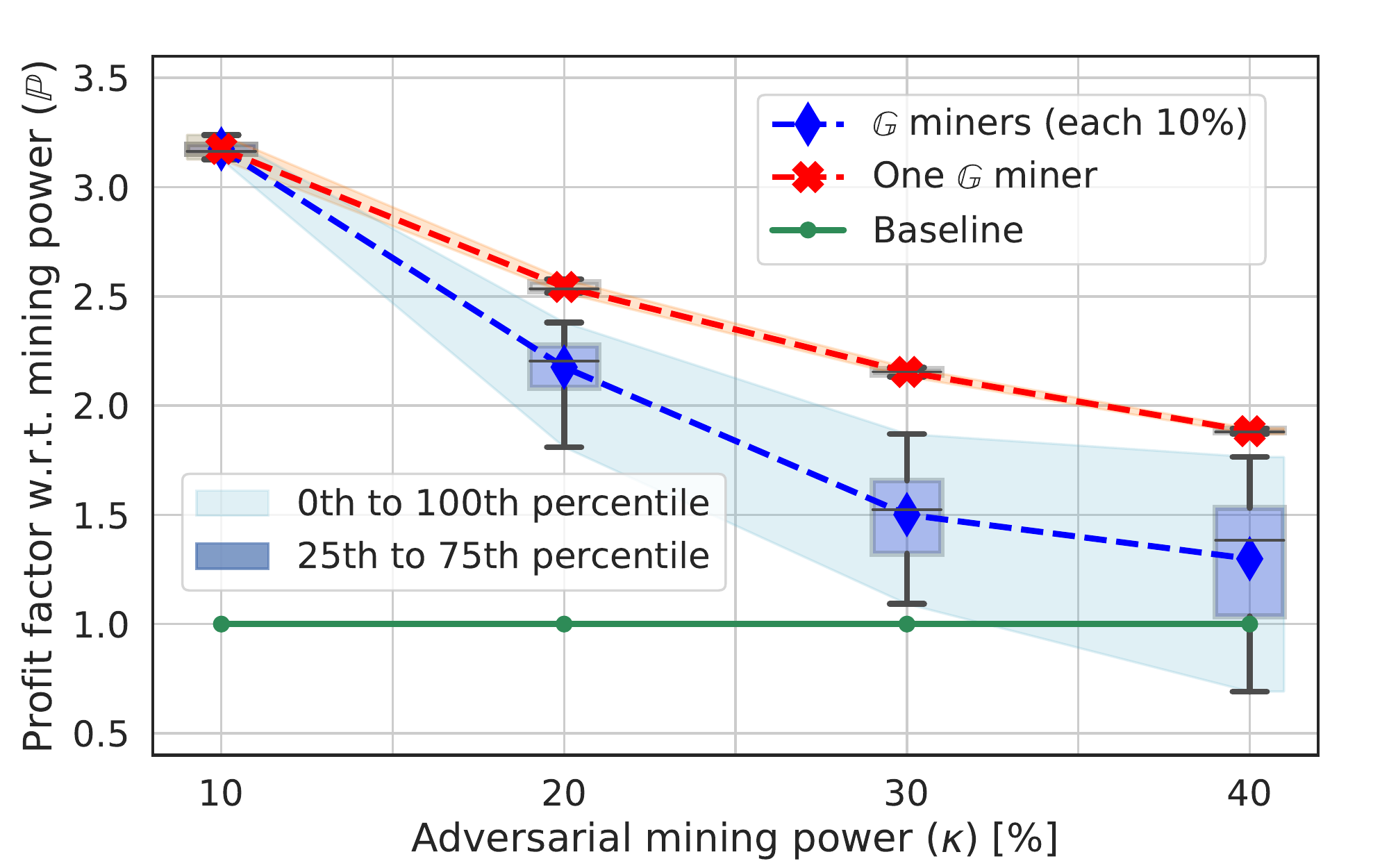}
		\vspace{-0.3cm}
		\caption{$\partial\tau$ = 5s.}
		\label{fig:complex2}
	\end{subfigure}
	\hspace{2cm}
	\begin{subfigure}[b]{0.35\textwidth}
		\centering
		\includegraphics[width=\textwidth]{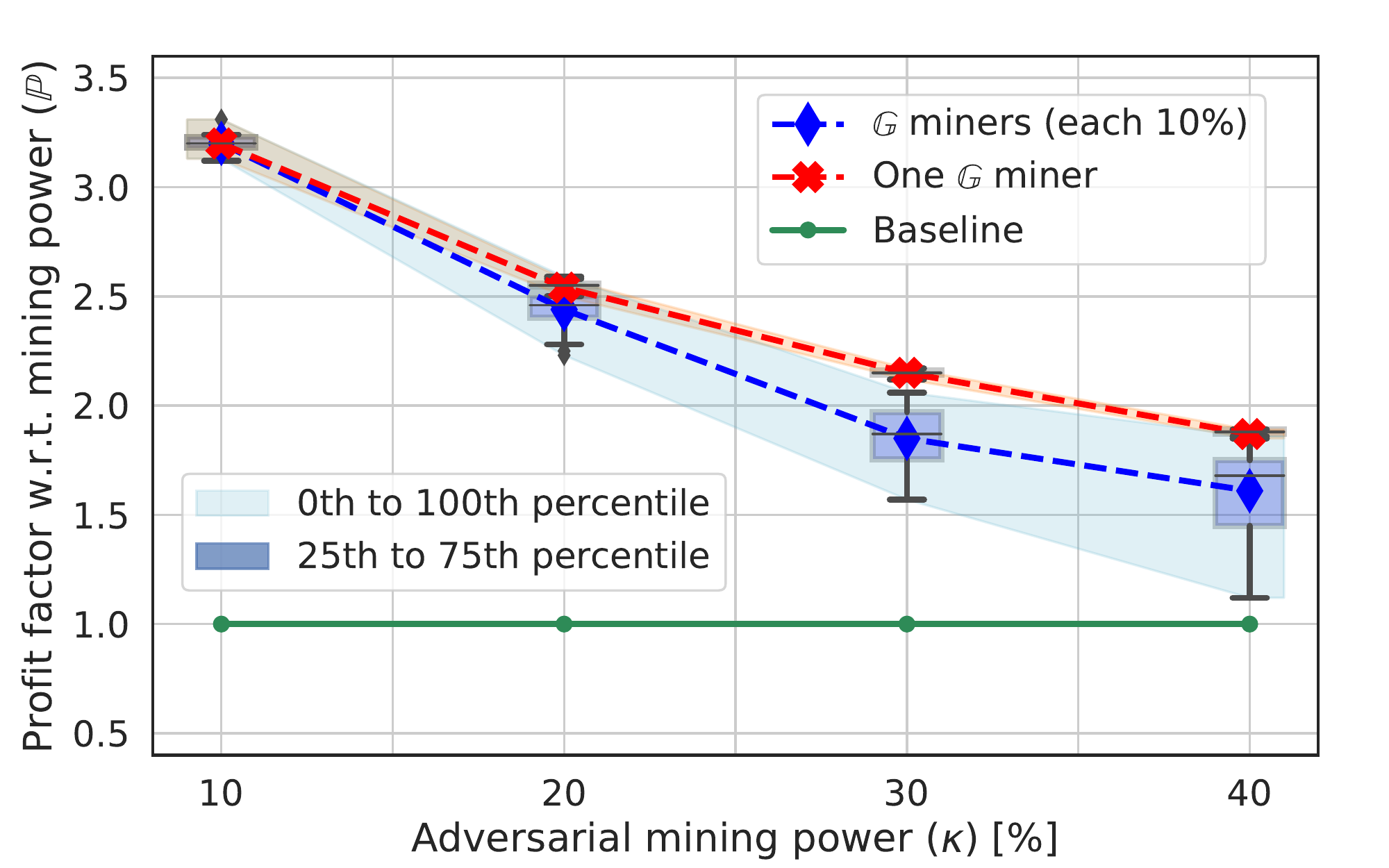}
		\vspace{-0.3cm}
		\caption{$\partial\tau$ = 0.5s.}
		\label{fig:complex4}
	\end{subfigure}
	
	\vspace{-0.4cm}
	\caption{The profit factor $\mathbb{P}$ of one \gls{adversarial-mining-power}-strong greedy miner (in {\color{red} \textbf{red}}) w.r.t. the total mining power of the network vs. the averaged $\overline{\mathbb{P}}$ of multiple greedy miners $\gls{cnt-malicious-miners} \in \langle 2,\ldots, 4 \rangle$, each with \gls{adversarial-mining-power} $= 10\%$ (in {\color{blue} \textbf{blue}}).
		The baseline represents $\mathbb{P}$ of an honest miner (in {\color{ao(english)} \textbf{green}}).}
	\label{big-figure-2}
	\vspace{-0.4cm}
	
\end{figure*}
\begin{figure*}[t]
	\vspace{-0.4cm}
	\centering
	\begin{subfigure}[b]{0.35\textwidth}
		\centering
		\includegraphics[width=\textwidth]{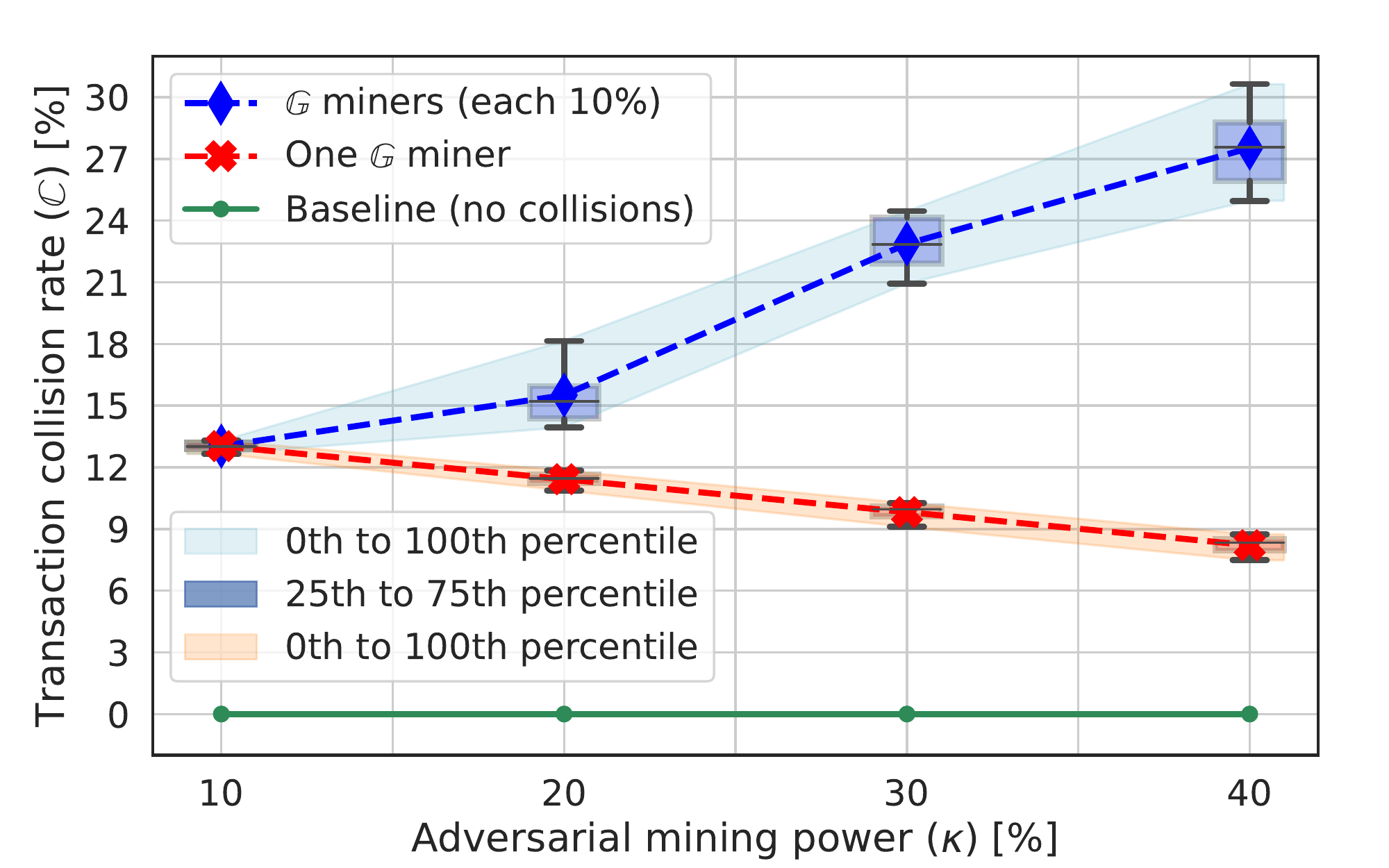}
		\vspace{-0.3cm}
		\caption{$\partial\tau$ = 5s.}
		\label{fig:collision1}
	\end{subfigure}
	\hspace{2cm}
	\begin{subfigure}[b]{0.35\textwidth}
		\centering
		\includegraphics[width=\textwidth]{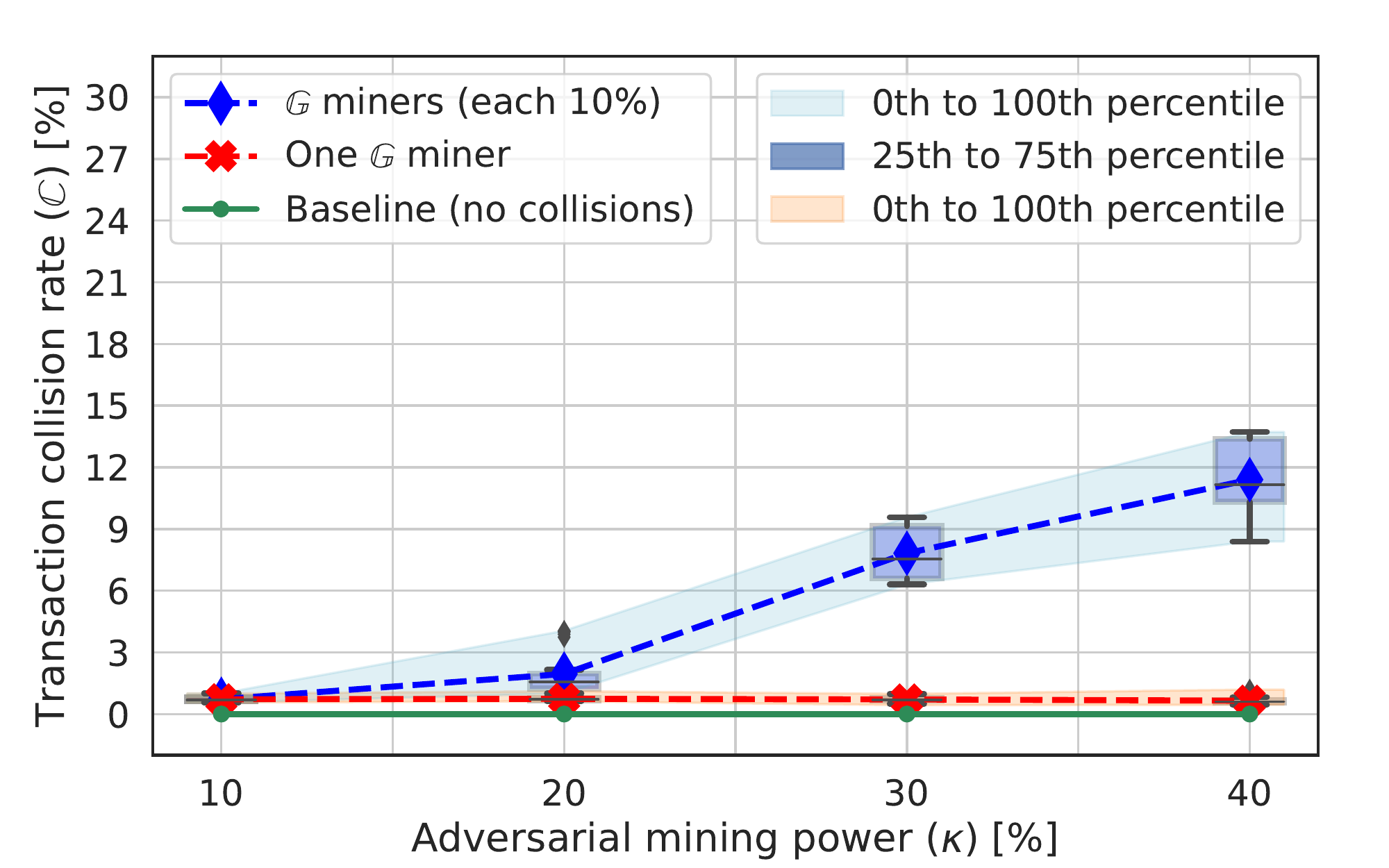}
		\vspace{-0.3cm}
		\caption{$\partial\tau$ = 0.5s.}
		\label{fig:collision2}
	\end{subfigure}
	
	\vspace{-0.4cm}
	\caption{The collision rate $\mathbb{C}$ caused by a single \gls{adversarial-mining-power}-strong greedy miner (in {\color{red} \textbf{red}}) vs. $\mathbb{C}$ caused by multiple greedy miners $\gls{cnt-malicious-miners} \in \langle 2,\ldots, 4 \rangle$, each with \gls{adversarial-mining-power} $= 10\%$ (in {\color{blue} \textbf{blue}}).
		The optimal baseline $\mathbb{C}$ of only honest miners (in {\color{ao(english)} \textbf{green}}) with no collisions.}
	\label{big-figure-3}
	\vspace{-0.4cm}
\end{figure*}

\subsection{Experiments with Complex Topology}\label{sec:complex-topology-settings}
We conducted more than 500 experiments in complex topology with \(7592\) nodes %
in various configurations (such as different connectivity and positions of greedy miners in the topology).
The generation of new transactions into mempools was made every $30$ to $120$ seconds and \gls{block-creation-rate} was set to $20$ seconds.
Since we know that \gls{network-propagation-delay} $>$ \gls{block-creation-rate} can cause a higher collision rate, we were interested in investigating this setting. 
Therefore, we distinguished two different $\partial\tau$: 0.5s and 5s, which may be considered as the lower and upper boundary (the latter meets \gls{network-propagation-delay} $>$ \gls{block-creation-rate} since 25-35s $>$ 20s).

The experiments with complex topology ran on the computation node with 20 cores and 128 GB of RAM, and they 
took 6 hours to complete on average. %
We emulated weakly and strongly connected miners by setting a different node degree -- we utilized a node degree distribution from~\cite{mivsic2019modeling} and projected it into our network by setting the weakly connected edge and a highly connected core.
We ensured the equal number of configurations with strongly and weekly connected greedy miners (which contributed to the spread in the results).

\subsection{Experiment Complex-I}\label{sec:exp-5}

\paragraph{\textbf{Goal}}
The goal of this experiment was to investigate the relative profit of one or more greedy miners w.r.t. the total mining power of the network.
We aimed at repeating the experiments from \autoref{sec:experiment-2} and \autoref{sec:exp-2.5}. %

\paragraph{\textbf{Methodology and Results}}
We compared two different scenarios.
In the first one, we experimented with \gls{adversarial-mining-power} of a single greedy miner vs. the honest rest of the network, while in the second scenario, we assumed multiple greedy miners \gls{cnt-malicious-miners} = $\in \{1, \ldots, 4\}$ (each with \gls{adversarial-mining-power} = 10\%); $\partial\tau$ was set to 5 seconds.
We can see in \autoref{fig:complex2} that a single miner is always more profitable than multiple miners, which might result in centralization by creating the greedy mining pool, as we outlined in \autoref{sec:experiment-2}.
The difference in profitability between the cases of a single and multiple miners is also significant.
If we compare these results to the simple topology, the single miner with \gls{adversarial-mining-power} = 10\% can earn 33\% of all profits in contrast to the simple topology where she can earn only 20\% of all profits.
This difference is not so significant with the higher \gls{adversarial-mining-power}.
E.g., in the case of \gls{adversarial-mining-power} = 40\%, a greedy miner on the complex topology can earn 75\% of rewards, while in the simple network it is only 55\%.
This can be caused by the high~\gls{network-propagation-delay}, favoring greedy miners that can steal new ``rich'' transactions before they can be propagated in the blocks mined by honest miners.

We re-executed the experiment with  $\partial\tau$ = 0.5s, emulating the lower boundary (i.e., real conditions).
The results are depicted in \autoref{fig:complex4}.
We can see a similar trend as with $\partial\tau$ = 5 seconds but the absolute values differ, which confirms that incentive attacks on \textsc{DAG-Protocols} are feasible even with realistic settings, not necessarily requiring \gls{network-propagation-delay} $>$ \gls{block-creation-rate}.

\subsection{Experiment Complex-II}\label{sec:exp-7}
\paragraph{\textbf{Goal}}
The goal of this experiment was to investigate the transaction collision rate $\mathbb{C}$  and the throughput of the network. %

\paragraph{\textbf{Methodology and Results}}
The methodology of the experiment is equivalent to \autoref{sec:exp-3}.
We used the setup with $\partial\tau$ adjusted to 5s and 0.5s, respectively (see \autoref{big-figure-3}).
When comparing these two settings, as one can expect, $\mathbb{C}$ is significantly smaller if $\partial\tau$ = 0.5s than in the case of $\partial\tau$ = 5s.
In the case of $\partial\tau$ = 5s, the miners have delayed information about already mined blocks (and their transactions), and thus updates of their mempools by deleting off transactions from already mined blocks are also delayed.
Therefore, the impact of incentive attacks on collision rate is more significant when \gls{network-propagation-delay} $>$ \gls{block-creation-rate} (which is a common assumption in \textsc{DAG-Protocol}s~\cite{sompolinsky2020phantom,sompolinsky2016spectre,sompolinsky2013GHOST}).
Another observation is that the lowest collision rate was achieved in the case of a single greedy miner, which decreases with increasing \gls{adversarial-mining-power} -- the miner controls the larger portion of blocks and thus decreases collisions in them.
However, this is not true with multiple such miners -- they are competing and thus negatively affect the collision rate (and their profits).
Therefore, they are incentivized in joining a mining pool to increase their profits (i.e., a single miner case).

\begin{figure}[!b]
	\vspace{0.4cm}
	\centering
	\begin{subfigure}{0.5\textwidth}
		\centering
		\includegraphics[width=.39\linewidth]{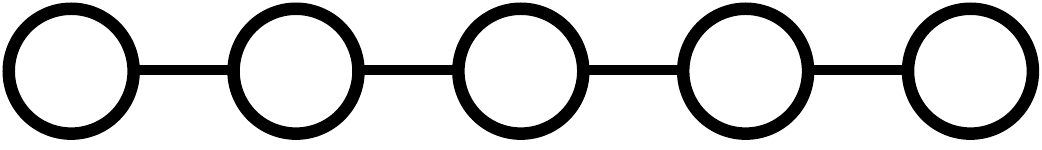}
		\caption{Line topology -- the worst case for block propagation.}\label{fig:line}
	\end{subfigure}
	\medbreak
	\begin{subfigure}{0.5\textwidth}
		\centering
		\includegraphics[width=.32\linewidth]{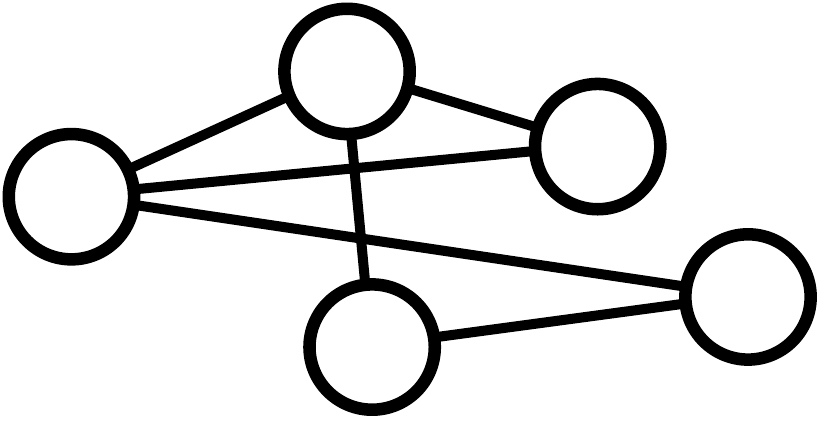}
		\caption{Common topology -- the closest to the realistic network.}\label{fig:common}
	\end{subfigure}
	\medbreak
	\begin{subfigure}{0.5\textwidth}
		\centering
		\includegraphics[width=.29\linewidth]{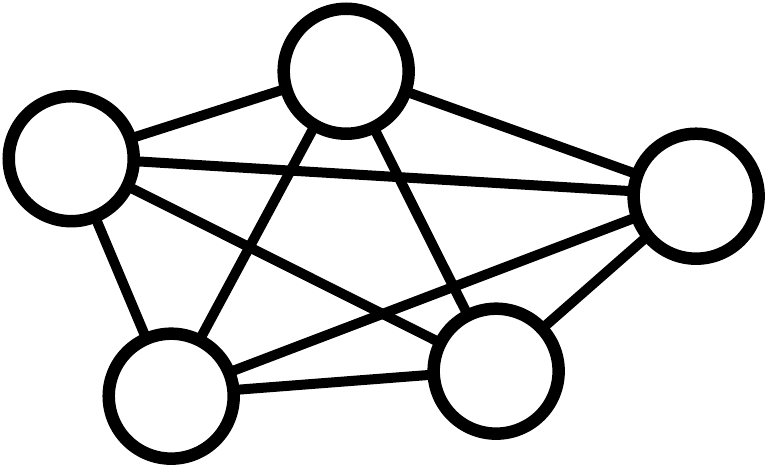}
		\caption{Fully connected topology -- the best case for block propagation.}\label{fig:fully}
	\end{subfigure}
	\caption{Various network topologies investigated in \autoref{sec:exp-8}.}
	\label{fig:topos}
\end{figure}

\subsection{Various Network Topologies}\label{sec:exp-8}
\paragraph{\textbf{Goal}}
The goal of this experiment was to investigate the effect of different network topologies on the impact of incentive attacks to \textsc{DAG-Protocol}.

\paragraph{\textbf{Methodology and Results}}
We experimented with three distinct network topologies, as shown in \autoref{fig:topos}.
Each network topology exhibited unique characteristics, which might not be realistic but enable us to investigate the effect of incentive attacks.
The line topology represented the worst-case scenario, where the gossip between any two nodes was the slowest due to the presence of the only path for block propagation. %
The common topology represented the most realistic scenario with a strongly connected core and weakly connected edge.
The fully connected topology represented the best case for the block propagation, where each message required only a single hop to be delivered.
All topologies consisted of 7000 nodes, and the $\partial\tau$ was generated using the exponential distribution reflecting an approximate $\tau$ of 5 seconds, which was fitted using the data from~\cite{misc:btcMonitoringWebsite}.
Similar to the previous experiments, a single greedy miner with \gls{adversarial-mining-power} ranging from 10\% to 40\% with a granularity of 10\% mining power was present.

Results of this experiment are depicted in \autoref{fig:agnostic}. 
We can observe that the network topology is almost an indifferent attribute.
Nevertheless, there is a slight variability, which indicates that greedy miners are still favored.
In sum, the decision to employ the greedy strategy is agnostic to network topology. %

The primary simple rationale behind considering the fully connected topology as the best scenario is its capacity to offer a comprehensive and improved understanding (overall overview) of the network's information dynamics, benefiting all participants involved.
As a result, miners give precedence to transactions that have not yet been included in blocks throughout the entire network, leveraging their complete visibility of the information.
This theoretical advantage allows for higher earnings and reduced collision rates.
This phenomenon aligns with the principle that transactions yielding profits are rewarded exclusively to the first miner to include them in a block.

\begin{figure}[t]
	\centering
	\includegraphics[width=0.9\linewidth]{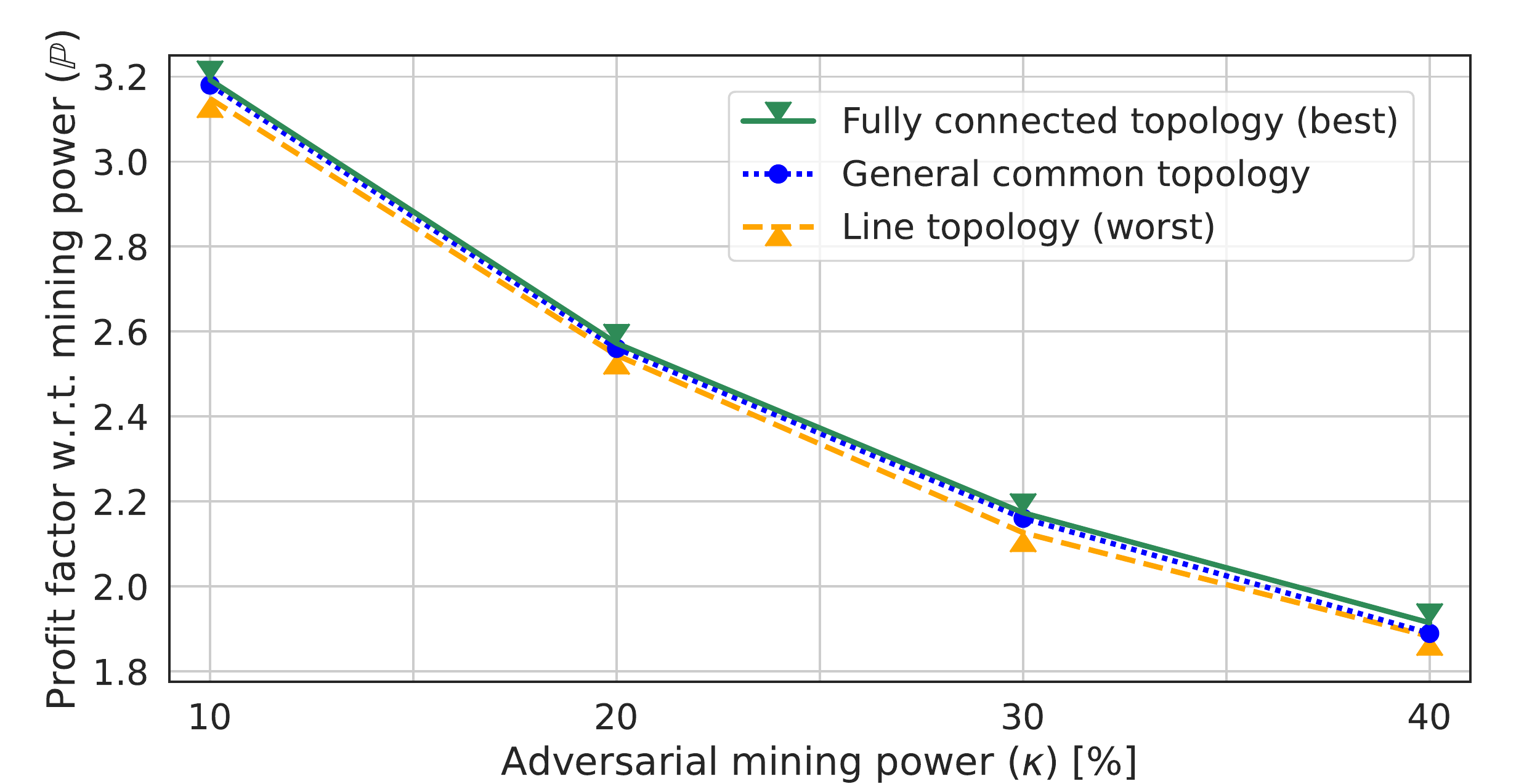}
	\caption{
		The profit factor $\mathbb{P}$ of a greedy miner with the mining power of \gls{adversarial-mining-power} = $\{10\%, \ldots, 40\%\}$.}		
		\label{fig:agnostic}
\end{figure}

\section{Countermeasures}\label{sec:countermeasures}
Our experiments supported Hypothesis \autoref{hypo:problem-definition}.
The main problem is not sufficiently enforcing the \gls{RTS}, i.e.,  verifying that transaction selection was indeed random at the protocol level.
Therefore, using the \gls{RTS} in the \textsc{DAG-Protocol} that does not enforce the interpretation of randomness will never avoid the occurrence of attackers from greedy transaction selection that increases their individual (or pooled) profits.

\paragraph{\textbf{Enforcing Interpretation of the Randomness}}
One countermeasure how to avoid arbitrary interpretation of the randomness in the \gls{RTS} is to enforce it by the consensus protocol.
An example of a DAG-based design using this approach is Sycomore~\cite{Anceaume2018Sycomore}, which utilizes the prefix of cryptographically-secure hashes of transactions as the criteria for extending a particular chain in \gls{dag}.
The PoW mining in Sycomore is further equipped with the unpredictability of a chain that the miner of a new block extends, avoiding the concentration of the mining power on ``rich'' chains.
Note that transactions are evenly spread across all chains of the DAG, which happens because prefixes of transaction hashes respect the uniform distribution -- transactions are created by clients different from miners, and clients have no incentives for biasing their transactions.

\begin{figure}[t]
	\centering
	\includegraphics[width=0.9\linewidth]{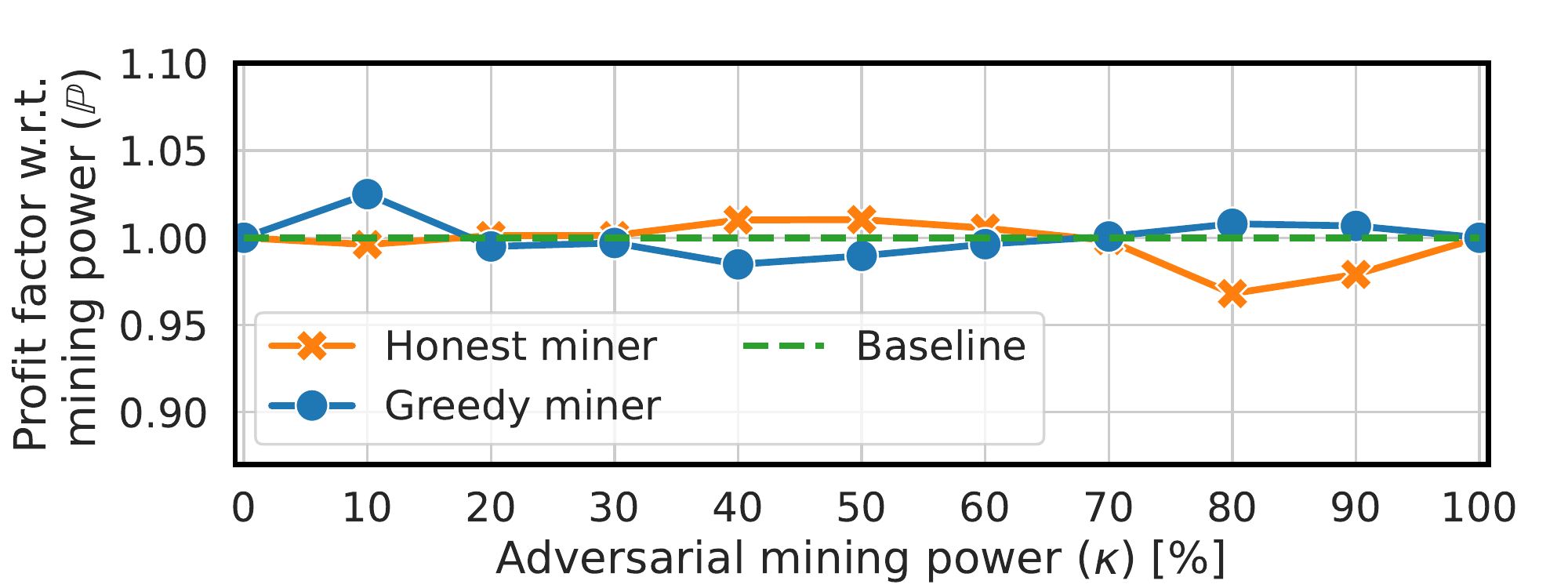}
	\caption{
		The profit factor $\mathbb{P}$ of a honest vs. a greedy miner with the mining power of 100\% - \gls{adversarial-mining-power} and \gls{adversarial-mining-power}, respectively.
		The baseline shows the expected $\mathbb{P}$ of the honest miner; \(\gls{block-creation-rate}=20s\).}\label{fig:flat-fees}
\end{figure}

\paragraph{\textbf{Fixed Transaction Fees}}
Another option how to make the \gls{RTS} viable is to employ fixed fees for all transactions as a blockchain network-adjusted parameter.
In the case of the full block capacity utilization within some period, the fixed fee parameter would be increased and vice versa in the case of not sufficiently utilized block capacity.

In contrast to the previous countermeasure, this mechanism does not enforce the interpretation of randomness while at the same time does not make incentives for greedy miners to  follow other than the \gls{RTS} strategy.
Therefore, miners using other than the \gls{RTS} would not earn extra profits -- we demonstrate it in \autoref{fig:flat-fees} and \autoref{fig:flat-fees2}, considering one honest vs. one greedy miner and one greedy vs. 9 honest miners, respectively.
Note that small deviations from the baseline are caused by the inherent simulation error that is present in the original simulator that we extended.
On the other hand, greedy miners may still cause increased transaction collision rate, and thus decreased throughput.
Therefore, we consider the fixed transaction fee option weaker than the previous one.

\section{Discussion and Future Work}
\label{sec:discussion-and-future-work}
\paragraph{\textbf{Centralization}}
In the scope of Experiment I (see \autoref{sec:experiment-1}), we demonstrated that the relative profit of greedy miners decreases as their number \gls{cnt-malicious-miners} increases. Therefore, greedy miners are incentivized to form a single mining pool, %
maximizing their relative profit.
As a negative consequence, the decentralization of the blockchain network is impacted.

\paragraph{\textbf{Throughout}}
In our simulations, we adjusted the parameters to focus on investigating potential issues related to decreased profits and general throughput (collisions), rather than maximizing the simulated protocol's throughput.
However, we argue that this had no impact on the results of our evaluation, and similar results can be achieved even with higher throughput (i.e., \gls{network-propagation-delay} $>$ \gls{block-creation-rate}).

\paragraph{\textbf{Connectivity of Miners}}
In our experiments, we used \gls{discount-function} = 1 and equally connected honest and greedy miners. However, in practice, greedy miners can be better connected since they want to include high-value transactions as the first ones, and thus profit even more. Assuming other \gls{discount-function} than 1 can result only in lower profits of (potentially) weakly connected honest miners while it does reduce the profits of greedy miners who are incentivized to be strongly connected regardless of \gls{discount-function}.

\paragraph{\textbf{Future Work}}
We plan to experiment in detail with the methods enabling \textsc{DAG-protocols} to be resilient to demonstrated attacks even in the context of Proof-of-Stake protocols. %
As the next step in theoretical analysis, it would be interesting to examine the dynamic behavior of greedy miners who adapt their strategies (switching between multiple strategies) to the current conditions of the protocol.
Last, we attributed a fee to the first miner who included a transaction in our simulations, but other schemes for distributing the fees among all miners could also be investigated.
However, such schemes would not eliminate collisions (and thus decreased throughput), thereby enabling miners still to select transactions greedily due to varying fees.

\begin{figure}[t]
	\centering
	\includegraphics[width=0.9\linewidth]{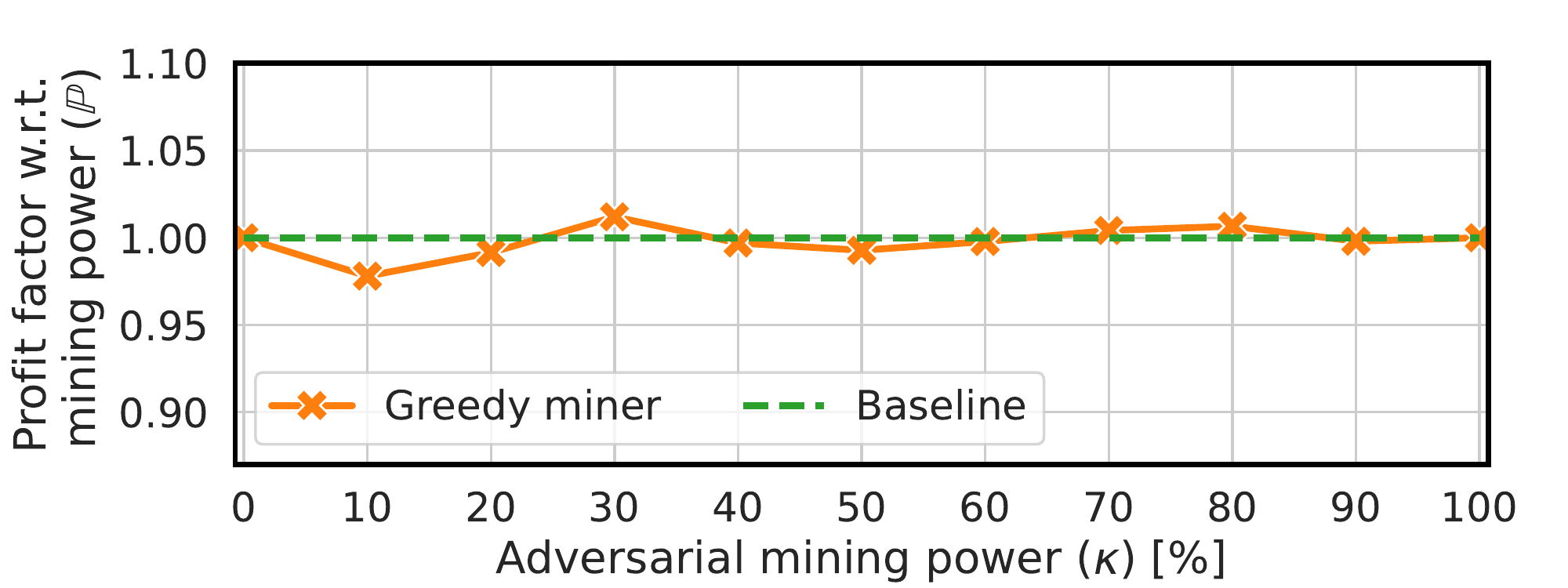}
	\caption{The averaged profit factor $\overline{\mathbb{P}}$ of a greedy miner equipped with \gls{adversarial-mining-power}.
		The rest of the network consisted of $9$ honest miners, each equipped with $\frac{100\% - \gls{adversarial-mining-power}}{9} \%$ of mining power.
		The baseline shows the expected $\overline{ \mathbb{P}}$ of an honest miner; \(\gls{block-creation-rate}=20s\).}\label{fig:flat-fees2} %
\end{figure}

\section{Related Work}
\label{sec:related}

\subsubsection*{\textbf{\gls{dag}-Based Consensus Protocols}}
The benefits of blockchain protocols come with certain trade-offs when balancing decentralization, scalability, and security.
We have already mentioned the bottleneck in Nakamoto's consensus protocol, and therefore, alternative approaches are emerging, such as \gls{dag}-based protocols.
Beside DAG-based protocols, also 2nd layer~\cite{poon2016bitcoinLightning,philips-polygon2021,homoliak2020aquareum} and sharding designs~\cite{luuElastico2016secure,ZamaniM018-rapidchain,Kokoris-KogiasJ18-omniledger}, addressing the same problems, had emerged.
However, in the current work context, we solely focus on DAG-based designs.
Wang et al.~\cite{Wang2020SoKDI} performed a detailed systematic overview of DAG-designs. They described six categories containing more than thirty \gls{dag}-based blockchain systems classified based on their characteristics and principles. They extend the commonly used classification based on the type of ledgers~\cite{Tang2020}.
GHOST~\cite{sompolinsky2015secure}, Inclusive Blockchain~\cite{lewenberg2015inclusive}, Conflux~\cite{li2018scaling}, Haootia~\cite{Tang2020}, and Byteball~\cite{Churyumov2016} represent \gls{dag} with the main chain.
Hashgraph~\cite{hashgraph2016} and Nano~\cite{Lemahieu2018NanoA} represent ledgers with parallel chains.

Nevertheless, out of these categories, DAGs with the main chain are related to our research, such as Inclusive~\cite{lewenberg2015inclusive}, SPECTRE~\cite{sompolinsky2016spectre}, PHANTOM and GHOSTDAG~\cite{sompolinsky2020phantom}.
We refer the reader to \autoref{sec:dag-oriented-solutions} for details about these protocols.
\subsubsection*{\textbf{Performance Analysis of DAGs}}
While many papers deal with the security and performance analysis of mentioned protocols, they consider neither mining strategy nor features of various incentive schemes.
Park et al.~\cite{8756973} address the performance of DAG-based blockchains and relate it to the optimization of profit. They show that the average number of parents \(n\) can influence the transaction processing time.
As a result, they propose a competitive-based transaction process system using a dynamic fee policy.

Birmpas et al.~\cite{Birmpas2020FairnessAE} propose a new general framework that captures ledger growth for a large class of DAG-based implementations to demonstrate the structural limits of DAG-based protocols. Even with honest miner behavior, fairness, and efficiency of the ledger can be affected by different connectivity levels.

One of the key technical problems of DAG-based protocols is identifying honest blocks. Wang proposes a MaxCord~\cite{MaxCord}, a framework using a different approach for honest block identification problems using graph theory. Based on the definition of the disparity measurement between blocks, they convert the problem into a maximum k-independent set problem.
Cao et al.~\cite{CAO2020480} compared the performance of three consensus mechanisms: Bitcoin (PoW), Nxt (PoS), and Tangle (\gls{dag}-PoW) in terms of parameters such as average time to generate a new block or confirmation delay and failure probability, showing how these mechanisms can affect the state of network resources or network load condition.

Sycomore~\cite{Anceaume2018Sycomore} and its extension Sycomore++~\cite{Aimen2022Sycomore++} is another DAG-oriented consensus protocol that utilizes DAGs to increase Nakamoto consensus throughput.
The protocol proposes that the chain responds to the dynamically increased number of transactions and splits them into multiple chains, thus creating \gls{dag} structure.
When the number of transactions decreases (utilization of blocks is reduced), the branches can be rejoined back into a single chain.
Transactions are evenly partitioned based on the prefix of their hash, and they are randomly inserted into their corresponding chain (branch).
The protocol does not directly suffer from our proposed attacks, although it might suffer from different problems related to double spending of transactions mined in parallel, which is however common for all DAG-oriented protocols.
\section{Conclusion}
\label{sec:conclusion}
In this work, we started with an overview of \gls{dag}-oriented consensus protocols for Proof-of-Work blockchains, which promise to increase the transaction throughput by using random transaction selection strategy.
We formulated a hypothesis that DAG protocols using the random strategy can be exploited by attackers not respecting such a strategy and instead selecting transaction based on the fees (i.e., greedy strategy), which can lead to deterioration of the overall transaction throughput.
We made a game theoretical analysis of concerned DAG-oriented protocols and concluded that the random strategy, as proposed in these protocols, does not constitute a Nash equilibrium since honest players enable the greedy player to ``parasite'' on the system.
This is contradictory result to Inclusive paper~\cite{lewenberg2015inclusive}, which does not assume that multiple greedy miners can form a mining pool. %

We conducted several experiments on simplified network topology as well as complex network using an abstracted \textsc{DAG-protocol}.
In our experiments, we analyzed the impact of greedy miners who deviated from the modeled DAG protocol by selecting transactions based on the highest fee. %
We demonstrated that greedy miners have a significant advantage over honest miners in terms of profit maximization.
Moreover, we showed that greedy miners have a detrimental impact on transaction throughput and have the incentive to form a mining pool, exacerbating the decentralization of the assumed consensus protocols.
\bibliographystyle{plain}
\bibliography{ref}

\begin{thebibliography}{10}

\bibitem{BlockSim:Alharby}
Maher Alharby and Aad van Moorsel.
\newblock Blocksim: A simulation framework for blockchain systems.
\newblock {\em SIGMETRICS Perform. Eval. Rev.}, 46(3):135--138, January 2019.

\bibitem{Anceaume2018Sycomore}
Emmanuelle Anceaume, Antoine Guellier, Romaric Ludinard, and Bruno Sericola.
\newblock Sycomore: A permissionless distributed ledger that self-adapts to
  transactions demand.
\newblock In {\em 2018 IEEE 17th International Symposium on Network Computing
  and Applications (NCA)}, pages 1--8. IEEE, 2018.

\bibitem{Churyumov2016}
{Anton Churyumov}.
\newblock Byteball: A decentralized system for storage and transfer of value.
\newblock \url{https://byteball. org/Byteball. pdf}, 2016.

\bibitem{Aoki2019SimBlockAB}
Yusuke Aoki, Kai Otsuki, Takeshi Kaneko, Ryohei Banno, and Kazuyuki Shudo.
\newblock Simblock: A blockchain network simulator.
\newblock In {\em IEEE INFOCOM 2019 - IEEE Conference on Computer
  Communications Workshops (INFOCOM WKSHPS)}, pages 325--329. IEEE, 2019.

\bibitem{Aumann}
Robert~J. Aumann.
\newblock Agreeing to disagree.
\newblock {\em The Annals of Statistics}, 4(6):1236--1239, 1976.

\bibitem{bagaria2019prism}
Vivek Bagaria, Sreeram Kannan, David Tse, Giulia Fanti, and Pramod Viswanath.
\newblock Prism: Deconstructing the blockchain to approach physical limits.
\newblock In {\em Proceedings of the 2019 ACM SIGSAC Conference on Computer and
  Communications Security}, pages 585--602. ACM New York, NY, USA, 2019.

\bibitem{boost}
{Beman Dawes and David Abrahams}.
\newblock {Boost C++ libraries}.
\newblock \url{https://www.boost.org}, 2001.

\bibitem{Birmpas2020FairnessAE}
Georgios Birmpas, Elias Koutsoupias, Philip Lazos, and Francisco~J.
  Marmolejo-Coss\'{\i}o.
\newblock Fairness and efficiency in dag-based cryptocurrencies.
\newblock In {\em Financial Cryptography and Data Security: 24th International
  Conference, FC 2020 , Kota Kinabalu, Malaysia, February 10-14, 2020 Revised
  Selected Papers}, pages 79--96, Berlin, Heidelberg, 2020. Springer-Verlag.

\bibitem{bowden2018block}
R.~Bowden, H.~P. Keeler, A.~E. Krzesinski, and P.~G. Taylor.
\newblock Block arrivals in the bitcoin blockchain, 2018.

\bibitem{CAO2020480}
Bin Cao, Zhenghui Zhang, Daquan Feng, Shengli Zhang, Lei Zhang, Mugen Peng, and
  Yun Li.
\newblock Performance analysis and comparison of pow, pos and dag based
  blockchains.
\newblock {\em Digital Communications and Networks}, 6(4):480--485, 2020.

\bibitem{Decker13-thieee}
Christian Decker and Roger Wattenhofer.
\newblock Information propagation in the bitcoin network.
\newblock In {\em IEEE P2P 2013 Proceedings}, pages 1--10. IEEE, 2013.

\bibitem{Aimen2022Sycomore++}
Aimen Djari, Emmanuelle Anceaume, and Sara Tucci-Piergiovanni.
\newblock Simulation study of sycomore++, a self-adapting graph-based
  permissionless distributed ledger.
\newblock In {\em 2022 4th Conference on Blockchain Research \& Applications
  for Innovative Networks and Services (BRAINS)}, pages 103--110. IEEE, 2022.

\bibitem{misc:btcMonitoringWebsite}
DSN-Research-Group.
\newblock Bitcoin network monitor.
\newblock online, 2015.

\bibitem{gavinsimulator}
{Gavin Andresen}.
\newblock {bitcoin miningsim}.
\newblock \url{https://github.com/gavinandresen/bitcoin\_miningsim}, 2015.

\bibitem{gervais2016security}
Arthur Gervais, Ghassan~O. Karame, Karl W\"{u}st, Vasileios Glykantzis, Hubert
  Ritzdorf, and Srdjan Capkun.
\newblock On the security and performance of proof of work blockchains.
\newblock In {\em Proceedings of the 2016 ACM SIGSAC Conference on Computer and
  Communications Security}, CCS '16, pages 3--16, New York, NY, USA, 2016.
  Association for Computing Machinery.

\bibitem{Gibbons1992-ug}
Robert Gibbons.
\newblock {\em A Primer in Game Theory}.
\newblock Prentice-Hall, London, England, jan 1992.

\bibitem{sim_bitcoin-simulator}
Vasilis Glykantzis and Arthur Gervais.
\newblock Bitcoin-simulator, capable of simulating any re-parametrization of
  bitcoin.
\newblock online, april 2016.

\bibitem{GOBEL201623}
J.~G\"{o}bel, H.P. Keeler, A.E. Krzesinski, and P.G. Taylor.
\newblock Bitcoin blockchain dynamics: The selfish-mine strategy in the
  presence of propagation delay.
\newblock {\em Performance Evaluation}, 104:23--41, 2016.

\bibitem{grunspan2020mathematics}
Cyril Grunspan and Ricardo P\'{e}rez-Marco.
\newblock The mathematics of bitcoin, 2020.

\bibitem{bitcoin-mempool-stats}
Jochen Hoenicke.
\newblock {Johoe's Bitcoin Mempool Statistic}, 2022.

\bibitem{homoliak2020aquareum}
Ivan Homoliak and Pawel Szalachowski.
\newblock Aquareum: A centralized ledger enhanced with blockchain and trusted
  computing, 2020.

\bibitem{boost2}
{Joaqu\'{i}n M L\'{o}pez Mu\~{n}oz}.
\newblock {Boost Multi-index Containers Library}.
\newblock
  \url{https://www.boost.org/doc/libs/1\_77\_0/libs/multi\_index/doc/index.html},
  2020.

\bibitem{Kokoris-KogiasJ18-omniledger}
Eleftherios Kokoris{-}Kogias, Philipp Jovanovic, Linus Gasser, Nicolas Gailly,
  Ewa Syta, and Bryan Ford.
\newblock Omniledger: {A} secure, scale-out, decentralized ledger via sharding.
\newblock In {\em {IEEE} {S\&P}}. IEEE, 2018.

\bibitem{hashgraph2016}
{Leemon Baird}.
\newblock The swirlds hashgraph consensus algorithm: Fair, fast, byzantine
  fault tolerance.
\newblock \url{https://www.swirlds.com/downloads/SWIRLDS-TR-2016-01.pdf}, 2016.

\bibitem{Lemahieu2018NanoA}
Colin LeMahieu.
\newblock Nano : A feeless distributed cryptocurrency network.
\newblock \url{https://www.exodus.com/assets/docs/nano-whitepaper.pdf}, 2018.

\bibitem{lewenberg2015inclusive}
Yoad Lewenberg, Yonatan Sompolinsky, and Aviv Zohar.
\newblock Inclusive block chain protocols.
\newblock In Rainer B{\"o}hme and Tatsuaki Okamoto, editors, {\em Financial
  Cryptography and Data Security}, pages 528--547, Berlin, Heidelberg, 2015.
  Springer Berlin Heidelberg.

\bibitem{li2018scaling}
Chenxing Li, Peilun Li, Dong Zhou, Wei Xu, Fan Long, and Andrew Yao.
\newblock Scaling nakamoto consensus to thousands of transactions per second,
  2018.

\bibitem{2019-Ziyao-survey-game}
Ziyao Liu, Nguyen~Cong Luong, Wenbo Wang, Dusit Niyato, Ping Wang, Ying-Chang
  Liang, and Dong~In Kim.
\newblock A survey on blockchain: A game theoretical perspective.
\newblock {\em IEEE Access}, 7:47615--47643, 2019.

\bibitem{luuElastico2016secure}
Loi Luu, Viswesh Narayanan, Chaodong Zheng, Kunal Baweja, Seth Gilbert, and
  Prateek Saxena.
\newblock A secure sharding protocol for open blockchains.
\newblock In {\em Proceedings of the 2016 ACM SIGSAC Conference on Computer and
  Communications Security}, CCS '16, pages 17--30, New York, NY, USA, 2016.
  Association for Computing Machinery.

\bibitem{Mailath}
George~J. Mailath and Larry Samuelson.
\newblock {\em {Repeated Games and Reputations: Long-Run Relationships}}.
\newblock Number 9780195300796 in OUP Catalogue. Oxford University Press,
  November 2006.

\bibitem{miller2003game}
J.D. Miller.
\newblock {\em Game Theory at Work: How to Use Game Theory to Outthink and
  Outmaneuvar Your Competition}.
\newblock McGraw Hill LLC, 2003.

\bibitem{mivsic2019modeling}
Jelena Mi{\v{s}}i{\'c}, Vojislav~B Mi{\v{s}}i{\'c}, Xiaolin Chang,
  Saeideh~Gholamrezazadeh Motlagh, and M~Zulfiker Ali.
\newblock Modeling of bitcoin's blockchain delivery network.
\newblock {\em IEEE Transactions on Network Science and Engineering},
  7(3):1368--1381, 2019.

\bibitem{nakamoto2008bitcoin}
Satoshi Nakamoto.
\newblock Bitcoin: A peer-to-peer electronic cash system, 2008.

\bibitem{Osborne1994}
Martin~J. Osborne and Ariel Rubinstein.
\newblock {\em A course in game theory}.
\newblock The MIT Press, Cambridge, USA, 1994.
\newblock electronic edition.

\bibitem{8756973}
Seongjoon Park, Seounghwan Oh, and Hwangnam Kim.
\newblock Performance analysis of dag-based cryptocurrency.
\newblock In {\em 2019 IEEE International Conference on Communications
  Workshops (ICC Workshops)}, pages 1--6. IEEE, 2019.

\bibitem{paulavivcius2021systematic}
Remigijus Paulavi\v{c}ius, Saulius Grigaitis, and Ernestas Filatovas.
\newblock A systematic review and empirical analysis of blockchain simulators.
\newblock {\em IEEE Access}, 9:38010--38028, 2021.

\bibitem{philips-polygon2021}
Daniel Phillips.
\newblock What is polygon (matic) and why it matters for ethereum.
\newblock
  \url{https://decrypt.co/resources/whatis-polygon-matic-and-why-it-matters-for-ethereum},
  2021.

\bibitem{poon2016bitcoinLightning}
Joseph Poon and Thaddeus Dryja.
\newblock The bitcoin lightning network: Scalable off-chain instant payments.
\newblock \url{https://lightning.network/lightning-network-paper.pdf}, 2016.

\bibitem{silvano2020iota}
Wellington~Fernandes Silvano and Roderval Marcelino.
\newblock Iota tangle: A cryptocurrency to communicate internet-of-things data.
\newblock {\em Future Generation Computer Systems}, 112:307--319, 2020.

\bibitem{singh2020game}
Rajani Singh, Ashutosh~Dhar Dwivedi, Gautam Srivastava, Agnieszka
  Wiszniewska-Matyszkiel, and Xiaochun Cheng.
\newblock A game theoretic analysis of resource mining in blockchain.
\newblock {\em Cluster Computing}, 23(3):2035--2046, 2020.

\bibitem{smith_1982}
John~Maynard Smith.
\newblock {\em Evolution and the Theory of Games}.
\newblock Cambridge University Press, 1982.

\bibitem{Kaspa}
Yonatan Sompolinsky.
\newblock Kaspa.
\newblock online, 2022.
\newblock \url{https://kaspa.org/}.

\bibitem{sompolinsky2016spectre}
Yonatan Sompolinsky, Yoad Lewenberg, and Aviv Zohar.
\newblock Spectre: A fast and scalable cryptocurrency protocol.
\newblock Cryptology ePrint Archive, Paper 2016/1159, 2016.

\bibitem{sompolinsky2020phantom}
Yonatan Sompolinsky, Shai Wyborski, and Aviv Zohar.
\newblock Phantom ghostdag: A scalable generalization of nakamoto consensus:
  September 2, 2021.
\newblock In {\em Proceedings of the 3rd ACM Conference on Advances in
  Financial Technologies}, AFT '21, pages 57--70, New York, NY, USA, 2021.
  Association for Computing Machinery.

\bibitem{sompolinsky2015secure}
Yonatan Sompolinsky and Aviv Zohar.
\newblock Accelerating bitcoin's transaction processing. fast money grows on
  trees, not chains.
\newblock Cryptology ePrint Archive, Paper 2013/881, 2013.

\bibitem{sompolinsky2013GHOST}
Yonatan Sompolinsky and Aviv Zohar.
\newblock Accelerating bitcoin’s transaction processing.
\newblock {\em Fast money grows on trees, not chains}, 2013.

\bibitem{Tang2020}
Shuyang Tang.
\newblock Bracing a transaction dag with a backbone chain.
\newblock Cryptology ePrint Archive, Report 2020/472, 2020.
\newblock \url{https://ia.cr/2020/472}.

\bibitem{Wang2020SoKDI}
Qin Wang, Jiangshan Yu, Shiping Chen, and Yang Xiang.
\newblock Sok: Diving into dag-based blockchain systems, 2020.

\bibitem{2020-Wang-game-mining}
Taotao Wang, Xiaoqian Bai, Hao Wang, Soung~Chang Liew, and Shengli Zhang.
\newblock Game-theoretical analysis of mining strategy for bitcoin-ng
  blockchain protocol.
\newblock {\em IEEE Systems Journal}, 15(2):2708--2719, 2021.

\bibitem{MaxCord}
Xu~Wang, Guohua Gan, and Ling-Yun Wu.
\newblock Framework and algorithms for identifying honest blocks in blockchain.
\newblock {\em PLOS ONE}, 15(1):1--14, 01 2020.

\bibitem{ZamaniM018-rapidchain}
Mahdi Zamani, Mahnush Movahedi, and Mariana Raykova.
\newblock Rapidchain: Scaling blockchain via full sharding.
\newblock In {\em ACM CCS}. ACM New York, NY, USA, 2018.

\end{thebibliography}

\appendix
\section{Appendix}\label{sec:appendix}

\printglossaries{}
\end{document}